\date{\today}
\newtheorem{theorem}{Theorem}
\newtheorem{definition}[theorem]{Definition}
\newtheorem{remark}[theorem]{Remark}
\begin{document}
	
\begin{frontmatter}
		
\title{A Digital Twin of a Compartmental Epidemiological Model based on a Stieltjes Differential Equation}

\author[Add:a]{Iv\'an Area}
\ead{area@uvigo.gal}

\author[Add:b]{F.J. Fern\'andez}
\ead{fjavier.fernandez@usc.es}

\author[Add:b]{Juan J. Nieto\corref{corD}}
\ead{juanjose.nieto.roig@usc.es}
\cortext[corD]{Corresponding author.}

\author[Add:b]{F. Adrián F. Tojo}
\ead{fernandoadrian.fernandez@usc.es}

\address[Add:a]{Universidade de Vigo, Departamento de Matem\'atica Aplicada II, 
E. E. Aeron\'autica e do Espazo, Campus de Ourense, 32004 Ourense, Spain}

\address[Add:b]{Instituto de Matem\'aticas, Universidade de Santiago de Compostela, 
15782 Santiago de Compostela, Spain}


\begin{abstract}
We introduce a digital twin of the classical compartmental SIR (Susceptible, Infected, Recovered) epidemic
model and study the interrelation between the digital twin and the system. In doing so, we use Stieltjes derivatives to feed the data from the real system to the virtual model which, in return, improves it in real time. As a byproduct of the model, we present a precise mathematical definition of solution to the problem. We also analyze the existence and uniqueness of solutions, 
introduce the concept of Main Digital Twin and present some numerical simulations with real data of the COVID-19 epidemic, showing the accuracy of the proposed ideas.
\end{abstract}

\begin{keyword}
Digital twin \sep Compartmental epidemiological model \sep Stieltjes differential equation \sep COVID-19

\medskip

\MSC[2020]{34A12 \sep 34A34 \sep 26A24.}

\end{keyword}

\end{frontmatter}

\section{Introduction}

A digital twin (DT) can be defined as an evolving digital profile of the historical and current behavior of a physical object or real process that helps optimize the performance of the real process. It can also be defined as an avatar of a real physical system which exists in the computer \cite{Ganguli,Piascik}. It is important to warn that a DT is not just a mathematical model or a virtual counterpart. It may be a classical mathematical model, but with the plus of real-time inter-connections and feedback from the real product or system to the virtual model and vice versa. A DT is, in other words, a virtual model representing its physical counterpart which interacts with it in a continuous loop. The name comes from a Roadmap report by John Vickers of NASA in 2010 in which the authors goal was to improve the physical model simulation of spacecraft \cite{Negri,Piascik}.

A DT consists of three distinct parts, namely: the physical object or real system, the digital/virtual model, and connections between these two objects. This method has been applied to, e.g., aircraft engines \cite{Chenzhao,Tuegel}, wind turbines \cite{Tao} and networks \cite{Alam}, just to cite some of them -cf. \cite{Chakraborty}. The main idea is that the real and digital objects are related in a way such that data flows from the real system to the virtual one, and the information from the digital one is available to understand and predict the evolution of the real system, or even act on it to modify its behavior. In this way, both objects interact in manner which is twice beneficial: allowing the obtaining of a more accurate model and shaping the real objects behavior accordingly.

\medskip
The study of pandemics by using mathematical compartmental models can be traced back to Kermack and McKendrick \cite{Kermack}. The objective of this method is to divide the total population in disjoint groups or compartments, depending on the state of health, and then study the spread level of a certain disease among the population. In this setting, every individual in the same group has the same characteristics, which simplifies the study of the model. The simplest epidemiological model is usually referred as SIR model and divides the total (constant) population in three compartments, namely: Susceptible to infection (those individuals who have not been previously exposed to the disease), Infected (those individuals who have been infected by the illness) and Recovered (those who have passed the disease). Despite its simplicity, the SIR model can yield powerfull insight in what managing a pandemic concerns, so it has been applied to analyze the spread of  COVID-19, \cite{Liao} among others. That said, there are many variations of the basic SIR model, and some of those include other compartments such as exposed, hospitalized, or superspreaders \cite{Wuhan}, and which consider classical or fractional derivatives \cite{Fractional}.

\medskip
The classical derivative can be also substituted by other operators such as Stieltjes derivatives \cite{POUSO2015,FriLo17}.  This kind of derivatives (assuming that $g$ is a continuous function at $x$) can be pictured as taking the value
\[
f'_{g}(x)=\lim_{y \to x} \frac{f(y)-f(x)}{g(y)-g(x)}.
\] for some nondecreasing and left-continuous function $g: {\mathbb{R}} \to {\mathbb{R}}$. This mathematical tool possesses many of the important properties of the usual derivative. Moreover, it is extremely helpful when it comes to analyzing systems of differential equations with impulses or latency periods \cite{LoMa18}. In our case, the impulses can be understood as corrections made to the initial model in order to take into consideration the new (real) data. Thereby, link between DT and Stieltjes derivatives   appears in a natural way.

\medskip
The main aim of this work is to introduce a precise mathematical definition of a solution to a digital twin, by using the Stieltjes derivative. Existence and uniqueness of solution are obtained by considering the proposed definition. Reinterpreting the concept of the digital twin as a Stieltjes differential equation allows us to study issues such as the existence and uniqueness of the said concept from a mathematical point of view. As an example, we will consider the simple SIR model to which we will feed real COVID-19 data which will illustrate, in turn, the utility of this approach. Numerical experiments are also presented to show the proposed definition and results.

\medskip
The structure of this work is as follows: In Section \ref{sec:2} we introduce the basic definitions and notations. More precisely, we present the classical SIR model by using classical derivatives, and introduce Stieltjes derivatives. In Section \ref{sec:3}, we give a new interpretation of the digital twin(s) by using the aforementioned Stieltjes derivatives and present a definition of solution, as well as a result concerning the existence of solution to the corresponding system of differential equations. Numerical simulations are presented in Section \ref{sec:4}, comparing this approach to real data of the COVID-19 pandemic. Finally, we include the conclusions in Section \ref{sec:5}.

\section{Basic definitions and notations}\label{sec:2}

Let us consider the SIR compartmental model 
\begin{equation}\label{eq:SIR}
\begin{cases}
S'(t)=-\beta\, S(t)\, I(t), \vspace{0.2cm} \\
I'(t)=\beta\, S(t)\, I(t) - \gamma I(t), \vspace{0.2cm} \\
R'(t)=\gamma\, I(t),
\end{cases}
\end{equation}
representing the evolution of a disease at each time $t\in [0,T]$ \cite{Brauer}. This is the so-called nominal model and it is calibrated at an initial time. Here, as usual, we have considered three compartments for the population, namely: $S(t)$ denotes the population of susceptible individuals to an infectious, but not deadly, disease at time $t$; $I(t)$ stands for the population of infected individuals at time $t$; and $R(t)$ is used to represent the recovered individuals at time $t$. We have used classical derivatives in the SIR model, and we have denoted by $\beta$ the rate at which an infected person infects a susceptible person, and by $\gamma$ the rate at which infected people recover from the disease. There is no explicit solution to the SIR model \cite{Harko} and the numerical solution is usually calculated by using an appropriate numerical method requiring initial conditions for $S(t)$, $I(t)$ and $R(t)$, or by considering a power series expansion \cite{Srivastava}. We would like to recall that the definition of \emph{basic reproduction number},
\begin{equation}\label{eq:r0}
{\mathcal{R}}_{0}=\text{population} \times \frac{\beta}{\gamma},
\end{equation}
which can be understood as the expected number of secondary cases produced by a single infection in a completely susceptible population.
\medskip

We will be working with Stieltjes derivatives, so it is necessary to introduce some concepts regarding this kind of derivative. Let $I\subset\mathbb R$ be an interval and $g:I\to\mathbb R$ a left-continuous non-decreasing function. We will refer to such functions as \emph{derivators}. Denote by $\tau_u$ the usual topology of $\mathbb R$. We define the \emph{$g$-topology} $\tau_g$ \cite{FriLo17} as the family of those sets $U\subset I$ such that for every $x\in U$ there exists $\delta\in\mathbb R^+$ such that, if $y\in I$ satisfies $|g(y)-g(x)|<\delta$, then $y\in U$.

We say $f$ is \emph{$g$-continuous at $x_0\in I$} if for  every $\epsilon\in\mathbb R^+$ there exists $\delta\in\mathbb R^+$ such that $|f(y)-f(x)|<\epsilon$ if $|g(y)-g(x)|<\delta$. We say $f$ is \emph{$g$-continuous on $I$} if it is $g$-continuous at every point in $I$.  It can be checked that a function $f:(I,\tau_g)\to(\mathbb R,\tau_u)$ is continuous if and only if it is $g$-continuous \cite[Lemma 6]{MarquezAlbes2020}.

We define the set
\[C_g:=\{ t \in \mathbb R \, : \, \mbox{$g$ is constant on $(t-\varepsilon,t+\varepsilon)$ for some $\varepsilon>0$} \}.\]
We also define $D_g$ as the set of all discontinuity points of $g$. Observe that, given that $g$ is nondecreasing, we can write $D_g=\{ t \in \mathbb R  :  \Delta^+ g(t)>0\}$ where $\Delta^+ g(t)=g(t^+)-g(t)$, $t\in\mathbb R$, and $g(t^+)$ denotes the right handside limit of $g$ at $t$. 

\begin{definition}\label{Stieltjesderivative}
	Let $g:[a,b]\to\mathbb R$ be derivator and $f:[a,b]\to\mathbb R$. We define the \emph{Stieltjes derivative}, or \emph{$g$--derivative}, of $f$ at a point $t\in \mathbb R\backslash C_g$ as
	\[
	f'_g(t)=\left\{
	\begin{array}{rl} 
		\displaystyle \lim_{s \to t}\frac{f(s)-f(t)}{g(s)-g(t)},\quad & t\not\in D_{g},\vspace{0.1cm}\\
		\displaystyle\lim_{s\to t^+}\frac{f(s)-f(t)}{g(s)-g(t)},\quad & t\in D_{g},
	\end{array}
	\right.
	\]
	provided the corresponding limits exist. In that case, we say that $f$ is \emph{$g$--differentiable at $t$}.
\end{definition}
Associated to $g$ there is also a measure $\mu_g$ defined as the Lebesgue-Stieltjes measure such that $\mu_g([a,b))=g(b)-g(a)$. Obviously, for $g(t)=t$, we have $f_g'(t)=f'(t)$, the usual derivative.

\section{The digital twin of a SIR model}\label{sec:3}

Given the epidemiological compartmental model of SIR type detailed in \eqref{eq:SIR}, we can construct its digital twin as
\begin{equation}\label{eq:DT}
\begin{cases}
\displaystyle{ \frac{\partial S}{\partial t} (t,t_{s}) =-\beta(t_{s}) S(t,t_{s}) I(t,t_{s})}, 
\vspace{0.2cm} \\
\displaystyle{\frac{\partial I}{\partial t} (t,t_{s}) =\beta(t_{s}) S(t,t_{s}) I(t,t_{s}) - \gamma(t_{s}) I(t,t_{s})}, 
\vspace{0.2cm} \\ 
\displaystyle{\frac{\partial R}{\partial t}(t,t_{s}) =\gamma(t_{s}) I(t,t_{s})},
\end{cases}
\end{equation}
where $t\in [0,T]$ and $t_{s}\in [0,T]$ are, respectively, the systems time and what we shall call, for lack of a better name, \emph{``slow time''} \cite{Ganguli}. The slow time $t_{s}$ can be thought of as a time variable which is much slower than $t$, such as, for instance, the number of cycles of the epidemic, some prescribed times at which the experimental data is available, or just a different time scale of some sort which carries with it qualitative and quantitative changes in the system which we have to take into consideration. These changes are represented by the fact that $\beta(t_{s})$ and $\gamma(t_{s})$ change with $t_{s}$.

Thus, $S$, $I$, and $R$ are functions of two variables, that is, the solution will depend also on $t_{s}$, so we can write $S(t,t_{s})$, $I(t,t_{s})$, $R(t,t_{s})$. As a  consequence, the equations of the dynamical system are expressed in terms of the partial derivatives with respect to the time variable $t$. This will not imply that the SIR model \eqref{eq:SIR} has turned into a partial differential equation system since the infinitesimal dependence is restricted to the variable $t$.

For $t_{s}=0$ (the initial time) the digital twin reduces to the original model since the only information available at that time is the initial conditions. It is important to observe that the rates $\beta(t_{s})$ and $\gamma(t_{s})$ are unknown, so it will be necessary to estimate them from the data. 

It is assumed that changes in $\beta(t_s)$ and $\gamma(t_s)$ are so slow that the dynamics of \eqref{eq:DT} are effectively decoupled from these functional variations. As a consequence, both $\beta(t_s)$ and $\gamma(t_s)$ are constant as far as the instantaneous dynamics of \eqref{eq:DT}, as a function of the time $t$, is concerned. 

The system receives data and measures at times determined by $t_{s}$. We can assume that $t_s\in T_s$, where $\overline{T_s}=T_s\subset[0,T]$ is a locally finite set. Since, in this case, we are taking as time domain $[0,T]$ instead of $\mathbb R$ or $[0,\infty)$, this implies that $T_s$ is finite. We write $T_s=\{t_s^k\}_{k\geq 0}$ and assume that $t_s^0=0$ and $t_s^k<t_s^{k+1}$, for $k\geq 1$. As previously mentioned, at each time $t_s^k$, with $k\geq 1$, the system receives data that we can use to estimate the parameters $\beta(t_{s})$ and $\gamma(t_{s})$ and also to correct 
the evolution of the digital twin. For example, suppose that at each of the times $t_s^k$ the 
system receives real data from the variables $S$, $I$ and $R$:
\begin{displaymath}
\{(t_j^k,S^k_j)\}_{j=0}^M, \, \{(t_j^k,I^k_j)\}_{j=0}^M,\, \{(t_j^k,R^k_j)\}_{j=0}^M,
\end{displaymath}
where, if we assume that the times in $T_s$ are evenly spaced by a time span $\delta t>0$, 
$t_j^k=t_s^k-j\delta t$, $j=0,\ldots,M$, are the set of measurement times and 
$S^k_j$, $I^k_j$ and $R^k_j$, $j=0,\ldots,M$ are real data at times $t_j^k$, $j=0,\ldots,M$. 
In particular, $t_s^k=t_0^k$ and it may happen that $t_s^{k-1}\in \{t_j^k:\; j=0,\ldots,M\}$ if we consider 
overlap in the data that we will use to adjust the coefficients $\gamma(t_s)$ and 
$\beta(t_s)$. For instance, if the digital twin receives a data packet every two days 
made up of real data of four days back and the periodicity of the data is daily ($\delta t=1$), 
we are in the previous situation ($M+1=4$). For the first two elements of $T_s$ 
we have that:
\begin{itemize}
\item $t_s^1=4$ and $\{t_j^1\}_{j=0}^M=\{\boldsymbol{4},\boldsymbol{3},2,1\}$, 
\item $t_s^2=6$ and $\{t_j^2\}_{j=0}^M=\{6,5,\boldsymbol{4},\boldsymbol{3}\}$.
\end{itemize}
Note that in the above situation real data at times $2$ and $4$ are present in 
the data packets that are received at times $t_s^1=4$ and $t_s^2=6$. However, the 
real data is often corrected if errors are detected in the measurements. For example, 
in the context of the COVID pandemic, the government of Spain frequently corrected 
previous real data when they detected errors in the information received from the 
local governments. So, at each time step $t_j^k$ we have the opportunity of correcting 
possible errors. With these real data we can estimate the value of the parameters $\gamma(t_s^k)$ and 
$\beta(t_s^k)$:
\begin{displaymath}
\begin{array}{rcl}
\displaystyle
\gamma(t_s^k)&\equiv&
\displaystyle
\gamma\left(\{(t_j^k,S^k_j)\}_{j=0}^k, \, \{(t_j^k,I^k_j)\}_{j=0}^k,\, \{(t_j^k,R^k_j)\}_{j=0}^k\right), 
\vspace{0.2cm} \\
\beta(t_s^k)&\equiv&
\displaystyle
\beta\left(\{(t_j^k,S^k_j)\}_{j=0}^k, \, \{(t_j^k,I^k_j)\}_{j=0}^k,\, \{(t_j^k,R^k_j)\}_{j=0}^k\right).
\end{array}
\end{displaymath}
At time $t_s^0$ we assume that $\gamma(t_s^0)$ and $\beta(t_s^0)$ are known and also 
the initial data $(S_0^0,I_0^0,R_0^0)$. Additionally, in order 
to fit the digital twin to the real data at the points $t_s^k$, we will assume that
\[
S(t_s^k,t_s^k)=S^k_0, \quad
I(t_s^k,t_s^k)=I^k_0, \quad
R(t_s^k,t_s^k)=R^k_0,
\]
for $k\geq 0$. Therefore, \eqref{eq:DT} can be expressed in the following terms:
\begin{equation} \label{eq:DT2}
\begin{cases}
\displaystyle{ \frac{\partial S}{\partial t} (t,t_{s}^k) =-\beta(t_{s}^k) S(t,t_{s}^k) I(t,t_{s})}, 
\vspace{0.2cm} \\
\displaystyle{\frac{\partial I}{\partial t} (t,t_{s}^k) =\beta(t_{s}^k) S(t,t_{s}) I(t,t_{s}^k) 
- \gamma(t_{s}^k) I(t,t_{s}^k)}, 
\vspace{0.2cm} \\ 
\displaystyle{\frac{\partial R}{\partial t}(t,t_{s}^k) =\gamma(t_{s}^k) I(t,t_{s}^k)}, 
\vspace{0.2cm} \\
\displaystyle 
S(t_s^k,t_s^k)=S^k_0,\,
I(t_s^k,t_s^k)=I^k_0,\,
R(t_s^k,t_s^k)=R^k_0,\; \forall k\geq 1.
\end{cases}
\end{equation}
Next, we shall introduce the concept of solution for the previous system \eqref{eq:DT2}. On the one hand, it only makes sense to consider the solution at pairs $(t, t_s^k) $, with $t \geq t_s^k$, with $k\geq 0$. Indeed, the system receives real data at time $ t_s^k $. Following the principle of causality, it does not make sense to consider the solution at $(t,t_s^k)$ with $t<t_s^k$ for the data provided by $t_s^k$ is not yet available at time $t_k$. On the other hand, given an element $t \in [0, T]$, there exists a finite number of pairs $(t, t_s^k)$ with $ t\geq t_s^k$. Thus, we have a finite number of values for $S$, $I$ and $R$ to chose from at time $t$:
\begin{itemize}
\item If $t\in [t_s^0,t_s^1)$, we have $(S(t,t_s^0),I(t,t_s^0),R(t,t_s^0))$.
\item If $t\in [t_s^1,t_s^2)$, we have $\{(S(t,t_s^k),I(t,t_s^k),R(t,t_s^k)),\; k=0,1\}$.
\item In general, if $t\in [t_s^j,t_s^{j+1})$, we have 
$\{(S(t,t_s^k),I(t,t_s^k),R(t,t_s^k)),\; k=0,1,\ldots,j\}$.
\item If $t\in [\sup T_s,T]$, we have $\{(S(t,t_s^k),I(t,t_s^k),R(t,t_s^k)),\; k\geq 0\}$.
\end{itemize} 
As a consequence, we can define the solution to \eqref{eq:DT2} in the following way.

\begin{definition}
The solution to the problem \eqref{eq:DT2} is a multivalued map
\begin{displaymath}
DT: t\in [0,T]\rightarrow DT(t)\subset \bigsqcup_{k\ge 0} \left(\{t_s^k\}\times \mathbb{R}^3 \right),
\end{displaymath}
where $\sqcup$ denotes the disjoint union and
\begin{displaymath}
DT(t)=\left\{\begin{array}{ll}
\displaystyle \left\{
(t_k,(S(t,t_s^k),I(t,t_s^k),R(t,t_s^k))),\; k=0,1,\ldots,j
\right\}, & t \in [t_s^j,t_s^{j+1}),\; j \geq 0,\vspace{0.2cm} \\
\displaystyle \left\{
(t_k,(S(t,t_s^k),I(t,t_s^k),R(t,t_s^k))),\; k\geq 0
\right\}, & t\in [\sup T_s,T],
\end{array}\right.
\end{displaymath}
and each triple $(S(\cdot,t_s^k),I(\cdot,t_s^k),R(\cdot,t_s^k))$ is a solution to system \eqref{eq:SIR} on $[t_s^k,t_s^{k+1})$ (or on $[\sup T_s,T]$) subject to the initial conditions $S(t_s^k)=S^k$, $I(t_s^k)=I^k$, $R(t_s^k)=R^k$ and with $\beta_s$ $\gamma_s$ instead of $\beta$ and $\gamma$ respectively.
\end{definition}
\begin{remark} If we are willing to lose the information regarding which value $t_s$ generates which point in $\mathbb R^3$, we can simplify the definition to a multivalued map
	\begin{displaymath}
		DT: t\in [0,T]\rightarrow DT(t)\subset \mathbb{R}^3,
	\end{displaymath}
	where
	\begin{displaymath}
		DT(t)=\left\{\begin{array}{ll}
			\displaystyle \left\{
			(S(t,t_s^k),I(t,t_s^k),R(t,t_s^k)),\; k=0,1,\ldots,j
			\right\}, & t \in [t_s^j,t_s^{j+1}),\; j \geq 0,\vspace{0.2cm} \\
			\displaystyle \left\{
			(S(t,t_s^k),I(t,t_s^k),R(t,t_s^k)),\; k\geq 0
			\right\}, & t\in [\sup T_s,T].
		\end{array}\right.
	\end{displaymath}
	\end{remark}
%

From the previous multivalued solution, we can define the concept of the Main Digital Twin (MDT) 
as the ``best forecast'' that we can make at all times. We assume that this best forecast at a time $t$ corresponds 
to the branch associated with the closest previous $t_s^k$ element.

\begin{definition} Under the previous hypotheses, we define the Main Digital Twin (MDT) 
as follows:
\begin{multline} \label{eq:MDT0}
MDT:t\in [0,T]\rightarrow MDT(t)
\\ =\left\{
\begin{array}{ll}
\displaystyle (S(t,t_s^k),I(t,t_s^k),R(t,t_s^k)),& t\in (t_s^k,t_s^{k+1}], \vspace{0.2cm} \\
\displaystyle (S(t,\sup T_s),I(t,\sup T_s),R(t,\sup T_s)), & t \in (\sup T_s,T]. 
\end{array}\right.
\end{multline}
\end{definition}

Observe that, given an element $t\in (t_s^k,t_s^{k+1}]$ (analogous 
for $t\in (\sup T_s,T]$),
\begin{displaymath}
(S({t_s^k}^+,t_s^k),I({t_s^k}^+,t_s^k),R({t_s^k}^+,t_s^k))
:=\lim_{t\to t_s^k+} (S(t,t_s^k),I(t,t_s^k),R(t,t_s^k)) = (S_0^k,I_0^k,R_0^k), \; k \geq 1,
\end{displaymath}
and this limit is not, in general, equal to $MDT(t_s^k)=
(S(t^k,t_s^{k-1}),I(t^k,t_s^{k-1}),R(t^k,t_s^{k-1}))$. In some way, the 
main digital twin can be interpreted as the solution to a system of differential 
equations with impulses. Indeed let us define
\begin{equation}\label{eqbcs}
\begin{array}{rcl}
\beta_s : t \in [0,T]& \rightarrow & \displaystyle \beta_s (t) = \beta(t_s^0)\chi_{[t_s^0,t_s^1]}(t)+
\sum_{k\geq 1} \beta(t_s^k) \chi_{(t_s^k,t_s^{k+1}]}(t) \\ &&
\displaystyle + \beta(\sup T_s) \chi_{(\sup T_s,T]}(t), 
\vspace{0.2cm} \\
\gamma_s : t \in [0,T]& \rightarrow & \displaystyle \gamma_s (t) = \gamma(t_s^0)\chi_{[t_s^0,t_s^1]}(t)+
\sum_{k\geq 1} \gamma(t_s^k) \chi_{(t_s^k,t_s^{k+1}]}(t) \\ &&
\displaystyle+ \gamma(\sup T_s) \chi_{(\sup T_s,T]}(t).
\end{array}
\end{equation}
We have that the main digital twin $MDT(t)=(S(t),I(t),R(t))$ is the solution to
\begin{equation} \label{eq:mdt1}
\begin{cases}
S'(t)=-\beta_s(t)\, S(t)\, I(t), \vspace{0.2cm} \\
I'(t)=\beta_s(t)\, S(t)\, I(t) - \gamma_s(t)\, I(t), \vspace{0.2cm} \\
R'(t)=\gamma_s(t)\, I(t),  \; m-a.e.\; t \in [0,T]\setminus (T_s\setminus \{t_s^0\}), 
\vspace{0.2cm} \\
S(0)=S_0^0, \; I(0)=I_0^0, \; R(0)=R_0^0, 
\vspace{0.2cm} \\
S({t_s^k}^+)=S_0^k,\;
I({t_s^k}^+)=I_0^k,\; 
R({t_s^k}^+)=R_0^k,\; k \geq 1,
\end{cases}
\end{equation}
where $m$ is the Lebesgue measure on $[0,T]$. If we define 
$\mathbf{x}(t)=(x_1(t),x_2(t),x_3(t))=(S(t),I(t),R(t))$, 
\begin{displaymath}
\mathbf{I}_{t_s^k}(\mathbf{x}(t))=(S_0^k-x_1(t),I_0^k-x_2(t),R_0^k-x_3(t)),
\end{displaymath}
and
\begin{displaymath}
\mathbf{f}_s:(t,\mathbf{x})\in [0,T]\times \mathbb{R}^3 \rightarrow 
\mathbf{f}_s(t,\mathbf{x})=\begin{pmatrix}
\displaystyle -\beta_s(t)\, x_1\, x_2 \vspace{0.2cm} \\
\displaystyle \beta_s(t)\, x_1\, x_2 - \gamma_s(t)\, x_2 \vspace{0.2cm} \\
\displaystyle \gamma_s(t)\, x_2,
\end{pmatrix},
\end{displaymath}
equation \eqref{eq:mdt1} admits a classical formulation of a system of differential equations with impulses:
\begin{equation} \label{eq:mdt2}
\begin{cases}
\mathbf{x}'(t)=\mathbf{f}(t,\mathbf{x}(t)),m-a.e.\; t \in [0,T]\setminus (T_s\setminus \{t_s^0\}), 
\vspace{0.2cm} \\
\mathbf{x}(0)=(S_0^0,I_0^0, R_0^0), \vspace{0.2cm} \\
\mathbf{x}({t_s^k}^+)=\mathbf{x}(t_s^k)+\mathbf{I}_{t_s^k}(\mathbf{x}(t_s^k)), \; k\geq 1.
\end{cases}
\end{equation}
Differential equations with impulses are a special case of differential equations with Stieltjes derivatives \cite{FriLo17,POUSO2015}. Therefore, we can define the concept of solution to problem (\ref{eq:mdt2}) in terms of a system of differential equations with Stieltjes derivatives. We have the following result.


\begin{theorem} Let us assume that $\sup T_s < T$, consider the  derivator
\begin{equation} \label{eq:derivador}
g_s: t\in [0,T]\rightarrow g_s(t)= t + \sum_{\{k \geq 1:\; t_s^k<t\}} 2^{-k},
\end{equation}
and define
\begin{displaymath}
\mathbf{F}_s: (t,\mathbf{x}) \in [0,T] \times \mathbb{R}^3 \rightarrow 
\mathbf{F}_s(t,\mathbf{x})= 
\left\{ \begin{array}{ll}
\displaystyle \mathbf{f}(t,\mathbf{x}), & t \notin T_s \setminus\{t_s^0\}, \vspace{0.2cm} \\
\displaystyle \frac{\mathbf{I}_{t}(\mathbf{x})}{g_s(t^+)-g_s(t)}, & t \in T_s \setminus \{t_s^0\}.
\end{array}\right.
\end{displaymath}
Then $\mathbf{x}$ is a solution to (\ref{eq:mdt2}) if and only if $\mathbf{x}$ solves
\begin{equation}\label{eq:mdt3}
\begin{cases}
\mathbf{x}_{g_s}'(t)=\mathbf{F}_s(t,\mathbf{x}(t)),\; g_s-a.e.\; t \in [0,T], \vspace{0.2cm} \\
\mathbf{x}(0)=(S_0^0,I_0^0, R_0^0).
\end{cases}
\end{equation}
\end{theorem}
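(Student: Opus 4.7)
The plan is to exploit the explicit decomposition of the derivator $g_s$ as a continuous part (the identity) plus a pure jump part supported exactly on $T_s \setminus \{t_s^0\}$, with $\Delta^+ g_s(t_s^k) = 2^{-k}$ for $k\geq 1$. Since $T_s$ is finite and the identity has no flat intervals, one checks $C_{g_s}=\emptyset$ and $D_{g_s}=T_s\setminus\{t_s^0\}$, and the associated Lebesgue--Stieltjes measure splits as $\mu_{g_s}=m|_{[0,T]}+\sum_{k\geq 1} 2^{-k}\delta_{t_s^k}$. Hence a property that holds $g_s$-a.e.\ on $[0,T]$ is equivalent to it holding Lebesgue-a.e.\ on $[0,T]\setminus(T_s\setminus\{t_s^0\})$ together with it holding at every $t_s^k$, $k\geq 1$. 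This reduces the theorem to a pointwise matching of \eqref{eq:mdt2} and \eqref{eq:mdt3} on these two disjoint classes of points.

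For the continuous part, take $t\notin D_{g_s}$. No jump of $g_s$ separates $t$ from a nearby $s$, so $g_s(s)-g_s(t)=s-t$ for all $s$ in a sufficiently small neighborhood of $t$, and the first branch of Definition \ref{Stieltjesderivative} collapses to the classical difference quotient. Therefore $\mathbf{x}_{g_s}'(t)$ exists and equals the classical derivative $\mathbf{x}'(t)$ whenever the latter exists, and vice versa. Combined with the definition of $\mathbf{F}_s$ on this set, the equation $\mathbf{x}_{g_s}'(t)=\mathbf{F}_s(t,\mathbf{x}(t))$ is then identical, Lebesgue-a.e., to $\mathbf{x}'(t)=\mathbf{f}(t,\mathbf{x}(t))$, i.e., the first line of \eqref{eq:mdt2}.

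For a jump point $t_s^k\in D_{g_s}$, the second branch of Definition \ref{Stieltjesderivative} yields
\[
\mathbf{x}_{g_s}'(t_s^k)\;=\;\lim_{s\to t_s^{k+}}\frac{\mathbf{x}(s)-\mathbf{x}(t_s^k)}{g_s(s)-g_s(t_s^k)}\;=\;\frac{\mathbf{x}(t_s^{k+})-\mathbf{x}(t_s^k)}{g_s(t_s^{k+})-g_s(t_s^k)},
\]
provided the right-handside limit $\mathbf{x}(t_s^{k+})$ exists. The prescribed value $\mathbf{F}_s(t_s^k,\mathbf{x}(t_s^k))=\mathbf{I}_{t_s^k}(\mathbf{x}(t_s^k))/(g_s(t_s^{k+})-g_s(t_s^k))$, after cancelling the common denominator $2^{-k}$, is then equivalent to $\mathbf{x}(t_s^{k+})=\mathbf{x}(t_s^k)+\mathbf{I}_{t_s^k}(\mathbf{x}(t_s^k))$, which is precisely the impulse condition in \eqref{eq:mdt2}. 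Together with the common initial condition at $t=0$, chaining the two cases gives the ``if and only if'' in both directions.

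The main obstacle is justifying the existence of $\mathbf{x}(t_s^{k+})$ on the Stieltjes side (on the impulsive side it is part of the data). This is where the choice of regularity class matters: posing \eqref{eq:mdt3} in the natural space of $g_s$-absolutely continuous functions from the calculus of \cite{POUSO2015,FriLo17} guarantees that solutions are regulated and admit one-sided limits at every point of $D_{g_s}$, so that the derivative formula above is well defined. Once this regularity is fixed on both sides, the argument above is a pointwise translation rather than an analytic subtlety, and the equivalence follows.
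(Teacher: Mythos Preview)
Your proposal is correct and follows essentially the same route as the paper's proof: both split the analysis into points $t\notin T_s\setminus\{t_s^0\}$, where $g_s(s)-g_s(t)=s-t$ locally so the Stieltjes and classical derivatives coincide, and jump points $t_s^k$, where the second branch of Definition~\ref{Stieltjesderivative} reduces to the impulse condition after cancelling the common factor $2^{-k}$. Your framing via the measure decomposition $\mu_{g_s}=m+\sum_{k\ge1}2^{-k}\delta_{t_s^k}$ and your explicit remark that $g_s$-absolute continuity guarantees the existence of $\mathbf{x}(t_s^{k+})$ add a bit of rigor that the paper leaves implicit, but the argument is otherwise the same.
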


\begin{proof} The proof is a particular case of \cite[Theorem 3.43]{Marquez} which itself draws from \cite[Section 3.2]{POUSO2015}. We include an adaptation for the convenience of the reader. First of all, note that the derivator (\ref{eq:derivador}) is well defined in the sense that it is non decreasing function and left-continuous. We must observe that, given an element $t_s^k \in T_s \setminus \{t_s^0\}$, $g({t_s^k}^+)-g(t_s^k)=2^{-k}>0$.  
So let us see now each of the conditions separately:
\begin{itemize}
\item Let $\mathbf{x}$ be an element that solves (\ref{eq:mdt2}). Given an element $t\in T_s\setminus \{t_s^0\}$ we have, since $\overline{T_s}=T_s$, that:
\begin{displaymath}
\lim_{s\to t} \frac{x_j(s)-x_k(t)}{g_s(s)-g_s(t)}=\lim_{s\to t} \frac{x_j(s)-x_k(t)}{t-s},\; j=1,2,3.
\end{displaymath}
Therefore we have that $(x_j)'_g(t)$ exists if and only if $x_j'(t)$ exists and both derivatives are 
equal, so:
\begin{displaymath}
\mathbf{x}_{g_s}'(t)=\mathbf{F}(t,\mathbf{x}(t)),\; g_s-a.e\; t \in [0,T]\setminus (T_s\setminus \{t_s^0\}).
\end{displaymath}
Finally, given an element $t_s^k\in T_s\setminus \{t_s^0\}$, we have that:
\begin{displaymath}
(x_j)'_{g_s}(t_s^k) = \frac{x_j({t_s^k}^+)-x_j(t_s^k)}{g_s({t_s^k}^+)-g_s(t_s^k)}=
\frac{{I_{t_s^k}}_j(\mathbf{x}(t_s^k))}{g_s({t_s^k}^+)-g_s(t_s^k)},
\end{displaymath}
and then
\begin{displaymath}
\mathbf{x}_{g_s}'(t_s^k)=\mathbf{F}_s(t_s^k,\mathbf{x}(t_s^k)),\; \forall\, t_s^k \in 
T_s \setminus \{t_s^0\}.
\end{displaymath}
\item Let $\mathbf{x}$ be an element that solves (\ref{eq:mdt3}). Given an element $t\in [0,T]\setminus (T_s\setminus\{t_s^0\})$, we can proceed as in the previous case and we have that
\begin{displaymath}
\mathbf{x}'(t)=\mathbf{f}(t,\mathbf{x}(t)),\; m-a.e.\; t \in [0,T].
\end{displaymath}
Finally, for elements $t_s^k \in T_s\setminus \{t_s^0\}$:
\begin{displaymath}
\mathbf{x}({t_s^k}^+)=\mathbf{x}(t_s^k)+ \mathbf{x}_g'(t_s^k) \, (g({t_s^k}^+)-
g(t_s^k)) = \mathbf{x}(t_s^k)+\mathbf{I}_{t_s^k}(\mathbf{x}(t_s^k)),
\end{displaymath}
which concludes the proof.
\end{itemize}
\end{proof}

\begin{remark} Observe that:
\begin{itemize}
\item We have encoded in the derivator the times at which the main digital twin interacts 
with the real model.
\item In the term $\mathbf{F}_s$ we have coded how the real data correct 
the evolution of the main digital model. This correction is materialized in 
the modification of the parameters $\beta$ and $\gamma$, and in the correction 
of the main digital model at the times at which it interacts with the real model.
\end{itemize}
\end{remark}
 
Therefore, the concept of the main digital twin (\ref{eq:DT2}) can be understood 
in the following sense.

\begin{definition}[Main Digital Twin] We say that a function $\mathbf{x}\in 
\left[\mathcal{AC}_g([0,T])\right]^3$, with $\mathbf{x}(0)=(S_0^0,I_0^0,R_0^0)$ is 
a main digital twin of (\ref{eq:DT2}) if
\begin{equation} \label{eq:MDT4}
\mathbf{x}_{g_s}'(t)=\mathbf{F}_s(t,\mathbf{x}(t)),\; g_s-a.e.\; t \in [0,T].
\end{equation}
\end{definition}

Next we show a local result of existence and uniqueness for the main digital twin associated to (\ref{eq:DT2}) in the case $T_s$ finite. 

\begin{theorem} [Local existence and uniqueness of the main digital twin for (\ref{eq:DT2})]\label{teole} 
Assume $T_s$ finite. Then there exists $\tau \in (0,T]$ such that (\ref{eq:DT2}) has 
an unique solution in the interval 
$[0,\tau)$.
\end{theorem}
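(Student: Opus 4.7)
The plan is to reduce the problem, on a short enough initial interval, to a classical Cauchy problem for which the Picard--Lindelöf theorem applies directly, and then to pull the conclusion back through the equivalence established in the preceding theorem.

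First I would choose $\tau \in (0,T]$ with $\tau \leq t_s^1$ (this is possible whenever $T_s$ has at least two elements, since $t_s^0 = 0 < t_s^1$; in the degenerate case $T_s = \{0\}$ the same argument below works on all of $[0,T]$). On $[0,\tau)$ no element of $T_s\setminus\{t_s^0\}$ occurs, so by \eqref{eq:derivador} we have $g_s(t) = t$, and by \eqref{eqbcs} the coefficients reduce to the constants $\beta_s(t) \equiv \beta(t_s^0)$ and $\gamma_s(t) \equiv \gamma(t_s^0)$. In particular, $\mu_{g_s}$ coincides with the Lebesgue measure on this set, the $g_s$-derivative coincides with the ordinary derivative, and $\mathbf{F}_s(t,\mathbf{x}) = \mathbf{f}_s(t,\mathbf{x})$ throughout $[0,\tau)$. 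Therefore \eqref{eq:mdt3} restricted to $[0,\tau)$ is equivalent to the classical Cauchy problem
\[
\mathbf{x}'(t) = \mathbf{f}_s(t,\mathbf{x}(t)), \qquad \mathbf{x}(0) = (S_0^0, I_0^0, R_0^0).
\]

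Next I would apply the standard Picard--Lindelöf argument to this classical problem. The vector field $\mathbf{f}_s$ is polynomial in $\mathbf{x}$ with $t$-independent coefficients on $[0, t_s^1)$, hence Lipschitz on every bounded set. Fix $r>0$ and let $L$ and $M$ denote, respectively, a Lipschitz constant and an upper bound for $|\mathbf{f}_s|$ on $[0,t_s^1]\times \overline{B}((S_0^0,I_0^0,R_0^0),r)$. Choosing $\tau>0$ with $\tau \leq t_s^1$, $M\tau \leq r$ and $L\tau < 1$, the Picard operator
\[
(\mathcal{T}\mathbf{x})(t) = (S_0^0,I_0^0,R_0^0) + \int_0^t \mathbf{f}_s(s,\mathbf{x}(s))\,ds
\]
is a strict contraction on the closed subset of $C([0,\tau];\mathbb{R}^3)$ consisting of functions with values in $\overline{B}((S_0^0,I_0^0,R_0^0),r)$. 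Banach's fixed point theorem then yields a unique $C^1$ solution on $[0,\tau)$.

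Finally, I would invoke the equivalence proved in the preceding theorem to transfer this solution, together with its uniqueness, back to the Stieltjes formulation \eqref{eq:mdt3}, which is the formulation required by the definition of the main digital twin associated with \eqref{eq:DT2}. The argument presents no serious obstacle: because the SIR nonlinearity is polynomial and $\tau$ is chosen to avoid every impulse time $t_s^k$ with $k\geq 1$, the only point deserving attention is verifying that $g_s$ is literally the identity on $[0,\tau)$ and that no atom of $\mu_{g_s}$ lies in this open interval, both of which are immediate from \eqref{eq:derivador} as soon as $\tau \leq t_s^1$.
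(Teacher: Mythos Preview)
Your argument is correct, but it takes a different route from the paper's. You restrict attention to $[0,\tau)$ with $\tau\le t_s^1$, observe that on this interval $g_s(t)=t$ and the coefficients are constant, so \eqref{eq:mdt3} is literally a classical autonomous ODE with polynomial right-hand side, and then apply the ordinary Picard--Lindel\"of theorem. This is a perfectly valid and more elementary proof of the local statement.

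The paper instead verifies the hypotheses of the general Stieltjes existence result \cite[Theorem~7.4]{FriLo17} on the whole interval $[0,T]$: it checks $g_s$-measurability of $\mathbf{F}_s(\cdot,\mathbf{x})$, integrability of $\mathbf{F}_s(\cdot,\mathbf{x}_0)$, and a Lipschitz condition with a function $L\in\mathcal{L}^1_{g_s}$ that explicitly accounts for the jump contributions at the points of $T_s\setminus\{t_s^0\}$. The trade-off is this: your approach is self-contained and avoids the Stieltjes machinery entirely, but it forces $\tau\le t_s^1$ by construction and so gives no information about what happens at or past the first impulse. The paper's approach, while requiring an external theorem, shows that the Carath\'eodory--Lipschitz structure needed for existence is present globally (including across the jumps), so the local interval $[0,\tau)$ produced by \cite[Theorem~7.4]{FriLo17} is not a priori capped at $t_s^1$; this is more in the spirit of the Stieltjes framework the paper is trying to illustrate. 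Both arguments, however, establish exactly what the theorem claims.
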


\begin{proof} The proof is a particular case of \cite[Theorem 7.4.]{FriLo17}. For $r>0$ arbitrary we shall check the hypothesis of \cite[Theorem 7.4.]{FriLo17}. In this case, the funcion $F_s$ occurring in \eqref{eq:MDT4} is given by
	\[\mathbf{F}_s(t,(x_1,x_2,x_3))=\left\{ \begin{array}{ll}
		\displaystyle \left(
			\displaystyle -\beta_s(t)\, x_1\, x_2,
			\displaystyle \beta_s(t)\, x_1\, x_2 - \gamma_s(t)\, x_2 ,
			\displaystyle \gamma_s(t)\, x_2,
	\right), & t \notin T_s \setminus\{t_s^0\}, \vspace{0.2cm} \\
		\displaystyle \frac{(S_0^k-x_1,I_0^k-x_2,R_0^k-x_3)}{g_s((t_s^k)^+)-g_s(t_s^k)}, & t_s^k \in T_s \setminus \{t_s^0\}.
	\end{array}\right.\]
\begin{itemize}
\item For every $\mathbf{x}\in \overline{B\left((S_0^0,I_0^0,R_0^0),r\right)}$, where we are 
considering the euclidean norm in $\mathbb{R}^3$, the function $\mathbf{F}_s(\cdot,\mathbf{x})$ is clearly 
Borel measurable, in particular, $g$-measurable.
\item It is clear from \eqref{eqbcs} that $\beta_s,\, \gamma_s \in  \mathcal{L}_g^1([0,T))$, so  $\mathbf{F}_s(\cdot,(S_0^0,I_0^0,R_0^0))\in \mathcal{L}_g^1([0,T))$ as well.
\item Given $\mathbf{x},\, \mathbf{y}\in  \overline{B\left((S_0^0,I_0^0,R_0^0),r\right)}$, we have for $t\in [0,T]\setminus (T_s\setminus \{t_s^0\})$: 
\begin{displaymath}
\begin{array}{rcl}
\displaystyle
|F_1(t,\mathbf{x})-F_1(t,\mathbf{y})| & \leq & 
\displaystyle |\beta_s(t)| \, \left(\|\mathbf{x}\|+\|\mathbf{y} \| \right)
\|\mathbf{x}-\mathbf{y}\|, \vspace{0.2cm} \\
\displaystyle 
|F_2(t,\mathbf{x})-F_2(t,\mathbf{y})| & \leq & 
\displaystyle \left[|\beta_s(t)| \, \left(\|\mathbf{x}\|+\|\mathbf{y} \| \right) 
+|\gamma_s(t)| \right] \|\mathbf{x}-\mathbf{y}\|, \vspace{0.2cm} \\
\displaystyle 
|F_3(t,\mathbf{x})-F_3(t,\mathbf{y})| & \leq &
\displaystyle |\gamma_s(t)|\, \|\mathbf{x}-\mathbf{y}\|.
\end{array}
\end{displaymath}
On the other hand, given an element $t_s^k\in T_s\setminus \{t_s^0\}$, 
\begin{displaymath}
| F_j(t_s^k,\mathbf{x})-F_j(t_s^k,\mathbf{y})| \leq
\frac{1}{g({t_s^k}^+)-g(t_s^k)} \|\mathbf{x}-\mathbf{y}\|. 
\end{displaymath}
Thus if we denote by:
\begin{displaymath}
L(t)=2\, r\, |\beta_s(t)|+ |\gamma_s(t)|+\frac{1}{g({t}^+)-g(t)} \chi_{T_s\setminus \{t_s^0\}}(t),
\end{displaymath}
we have that:
\begin{displaymath}
\|\mathbf{F}(t,\mathbf{x})-\mathbf{F}(t,\mathbf{y})\|\leq 
L(t)\|\mathbf{x}-\mathbf{y}\|,
\end{displaymath}
and also
\begin{displaymath}
\|L\|_{\mathcal{L}_g^1([0,T)} \leq 
2\, r \, \|\beta_s\|_{\mathcal{L}_g^1([0,T)}+
\|\gamma_s\|_{\mathcal{L}_g^1([0,T)} + \sum_{t_s^k \in T_s\setminus \{t_s^0\}} 1 <\infty.
\end{displaymath}
\end{itemize}
\end{proof}

From now on, in order to simplify the matters, we will assume that $[0,T)$ the maximal interval in which Theorem~\ref{teole} is fulfilled. 

\section{Numerical simulations}\label{sec:4}

In this section we will present the numerical results that we have obtained using real data from the evolution of the COVID in Galicia (Spain). In our case we have considered real data from the May 2, 2020 (day 1) to April 11, 2021 (day 344). It is important to mention that real data were available since March 13, 2020, however, due to the change in the data collection methodology that took place on April 29, 2020, we have considered it appropriate to only take into account data from May 2, 2020 on. In figures 
\ref{figure1A}-\ref{figure1B} the evolution of the susceptible, infectious and recovered individuals is represented in the time period described above.

\begin{figure}[H]
\centering
\includegraphics[width=0.7\linewidth]{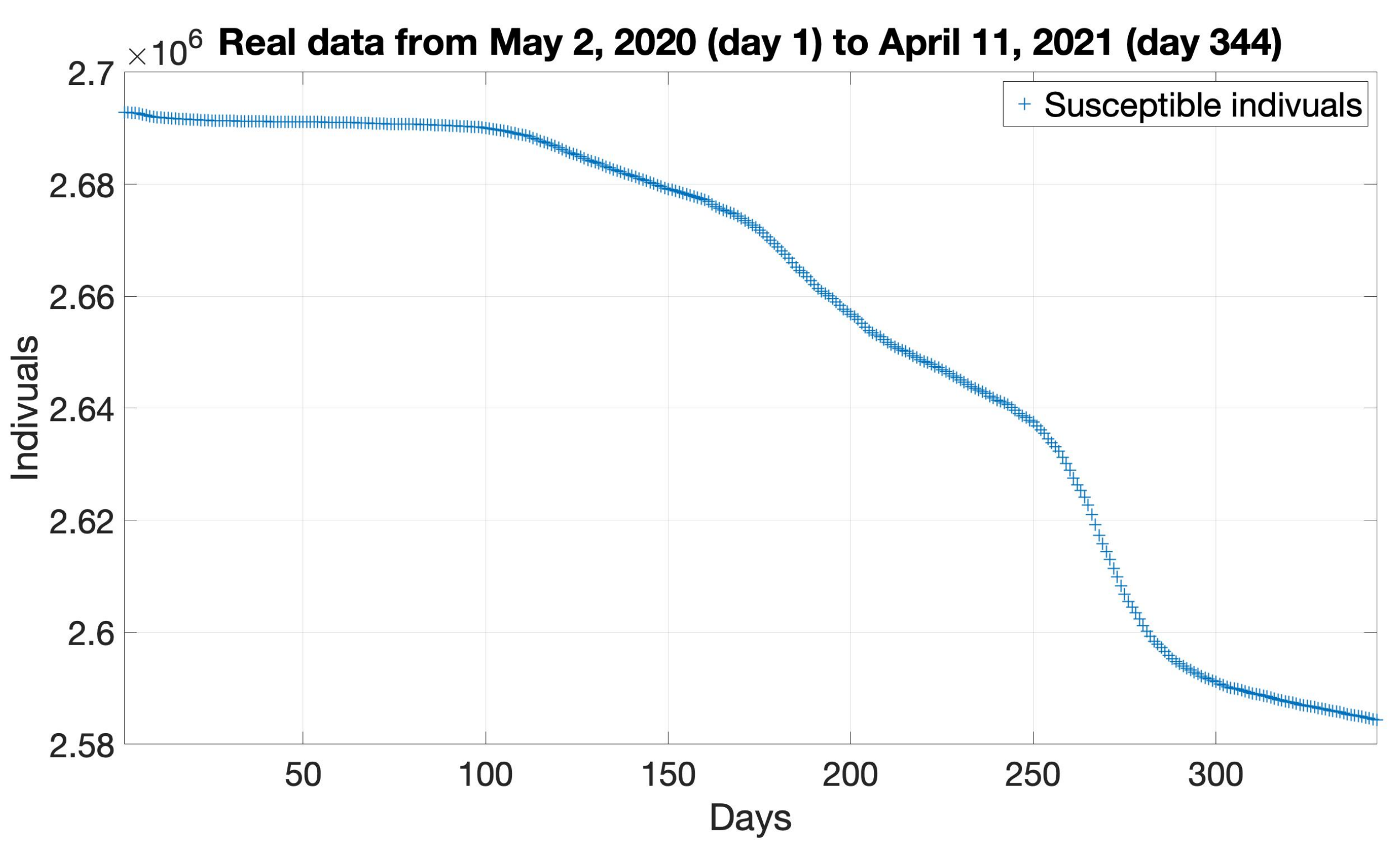}
\caption{Susceptible Individuals. Real data from May 2, 2020 (day 1) to April 11, 2021}
\label{figure1A}
\end{figure}

\begin{figure}[H]
\centering
\includegraphics[width=0.7\linewidth]{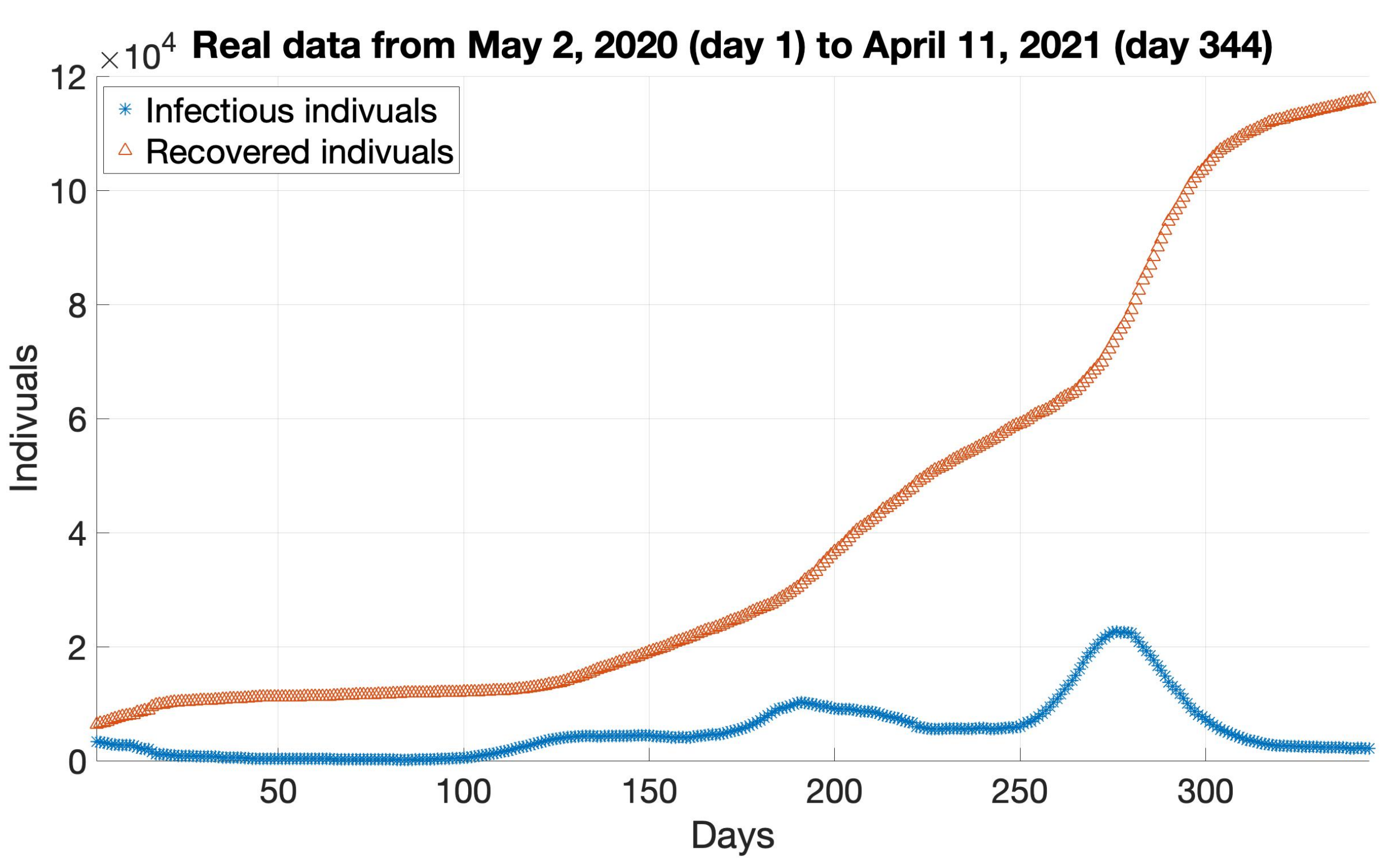}
\caption{Infectious and Recovered Individuals. Real data from May 2, 2020 (day 1) to April 11, 2021}
\label{figure1B}
\end{figure}

For the numerical simulations we have considered that the digital twin receives data 
each $7$ days and we have used $M+1=15$ days for adjusting the coefficients $\beta$ 
and $\gamma$. Thus, $t_s^0=1$, $t_s^1=15$, $t_s^k=t_s^{k-1}+7$, for $k\geq2$. The 
adjustment of the coefficients $\beta(t_s^1)$ and $\gamma(t_s^1)$ will be carried out using 
the data collected at times $\{1,\ldots,15\}$, for adjusting $\beta(t_s^2)$ and $\gamma(t_s^2)$ 
we have used data collected at times $\{8,\ldots,22\}$, and so on. 
\\

To fit the data we have used a classical fitting method based on solving the following least squares optimization problem: 
given real data $\{(t_j^k,S^k_j)\}_{j=0}^M, \, \{(t_j^k,I^k_j)\}_{j=0}^M,\, \{(t_j^k,R^k_j)\}_{j=0}^M$, find 
$(\beta(t_s^k),\gamma(t_s^k))$ such that
\begin{equation} \label{eq:fit}
J(\beta(t_s^k),\gamma(t_s^k))=\min\{J(\beta,\gamma):\; (\beta,\gamma)\in \mathbb{R}^2_+\},
\end{equation}
where
\begin{displaymath}
J(\beta,\gamma)=\sum_{j=0}^M \left(I^k_j-I(t_j^k)\right)^2+\left(R^k_j-R(t_j^k)\right)^2
\end{displaymath}
being $(S(t),I(t),R(t))$ the numerical solution to
\begin{displaymath}
\begin{cases}
S'(t)=-\beta\, S(t)\, I(t), \vspace{0.2cm} \\
I'(t)=\beta\, S(t)\, I(t) - \gamma I(t), \vspace{0.2cm} \\
R'(t)=\gamma\, I(t), \vspace{0.2cm} \\
(S(0),I(0),R(0))=(S^k_M,I^k_M,R^k_M).
\end{cases}
\end{displaymath}
To solve numerically  the previous equation we have used the Matlab command \textbf{ode45} and, for solving the optimization problem (\ref{eq:fit}), we have used the \textbf{fminsearch} Matlab function. Computations have been performed on a 2019 MacBook Pro (2,5 GHz Intel  Core i5 with 4 kernels). The CPU time required to make the $48$ adjustments was $6.88$ seconds. We observe that in the fitting process we have not used the $S$ variable due to two reasons. On the one hand, the different scale in the data could produce unsatisfactory adjustments and, on the other, the variable $S$ can be obtained from the previous ones taking into account the size of the population ($2.702.592$ individuals), so it is not necessary at the optimization step. In figures \ref{figure2A}-\ref{figure2B} and \ref{figure3A}-\ref{figure3B} 
we include two examples of the fitting process for $t_s^1=15$ and $t_s^{17}=127$, respectively.

\begin{figure}[H]
\centering
\includegraphics[width=0.7\linewidth]{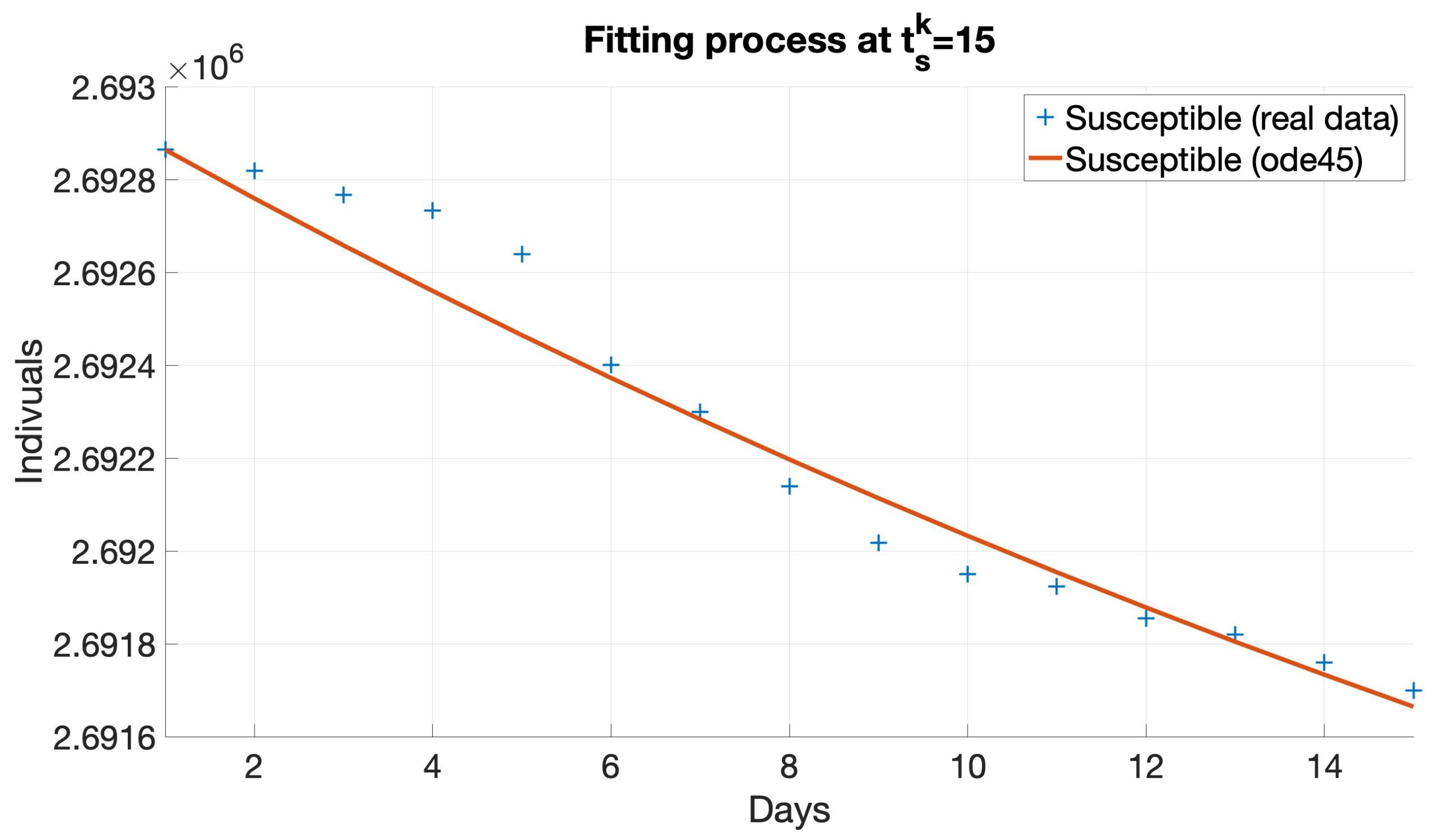}
\caption{Fitting process at $t_s^1=15$ (Susceptible Individuals)}
\label{figure2A}
\end{figure}

\begin{figure}[H]
\centering
\includegraphics[width=0.7\linewidth]{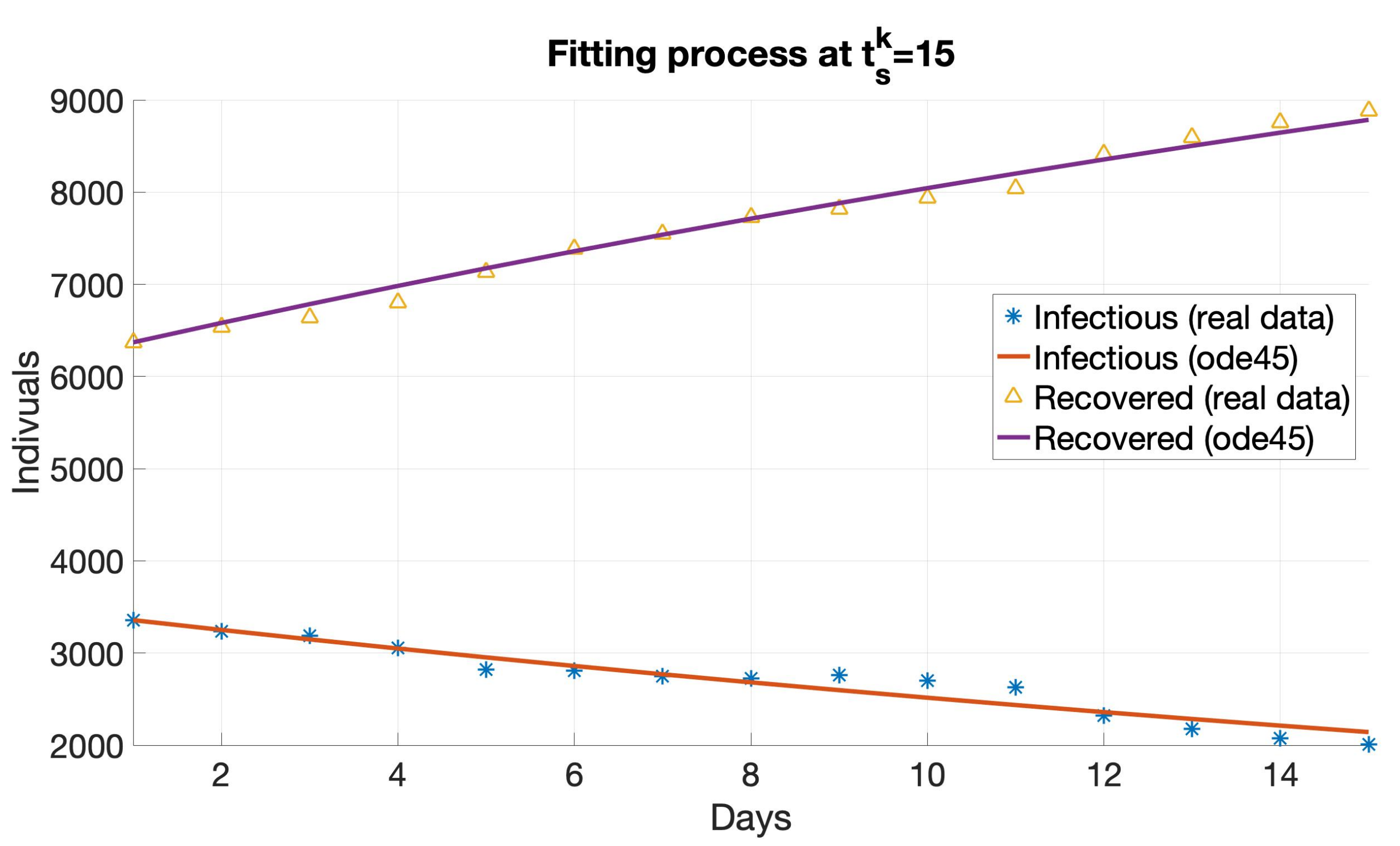}
\caption{Fitting process at $t_s^1=15$ (Infectious and Recovered Individuals)}
\label{figure2B}
\end{figure}

\begin{figure}[H]
\centering
\includegraphics[width=0.7\linewidth]{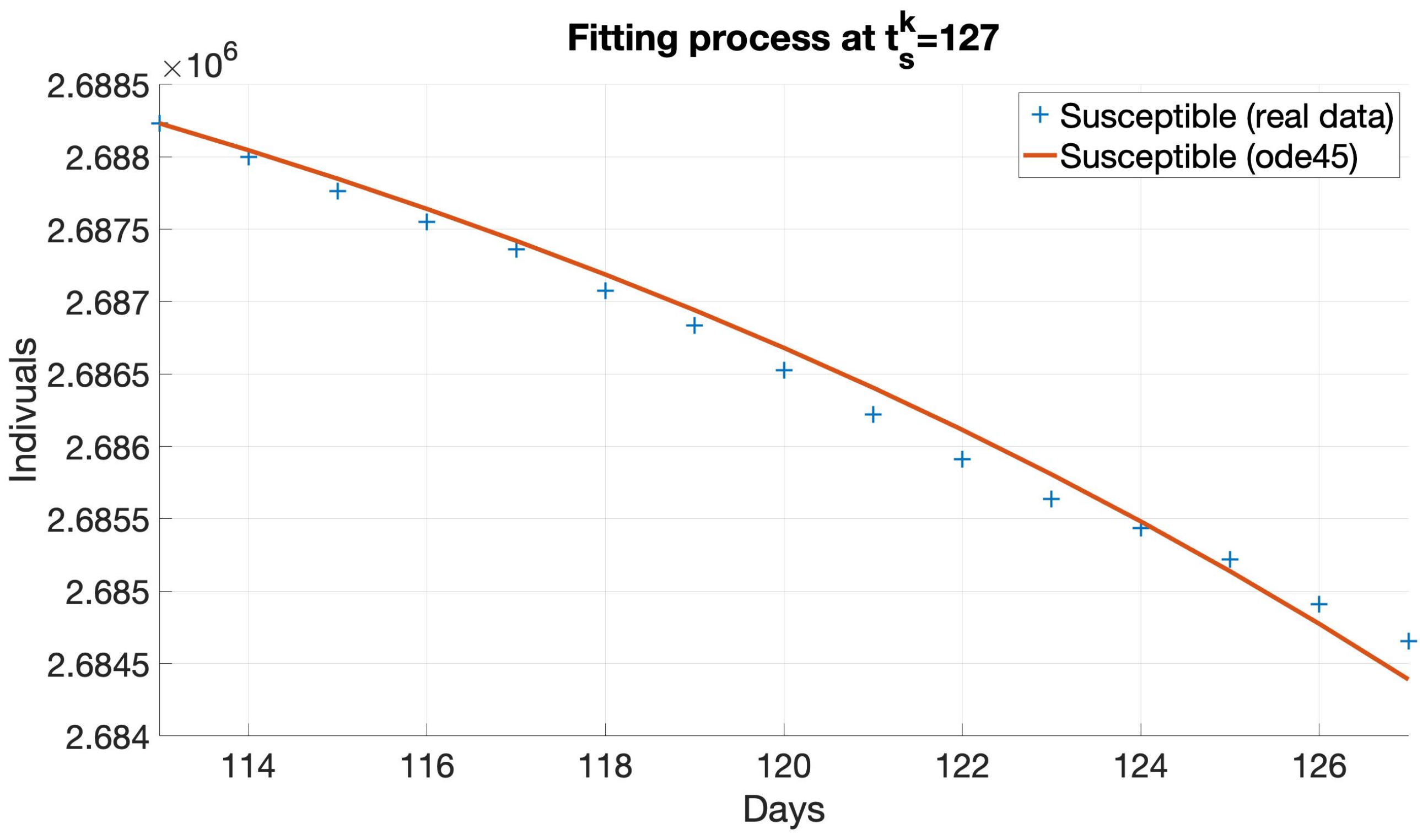}
\caption{Fitting process at $t_s^{17}=127$ (Susceptible Individuals)}
\label{figure3A}
\end{figure}

\begin{figure}[H]
\centering
\includegraphics[width=0.7\linewidth]{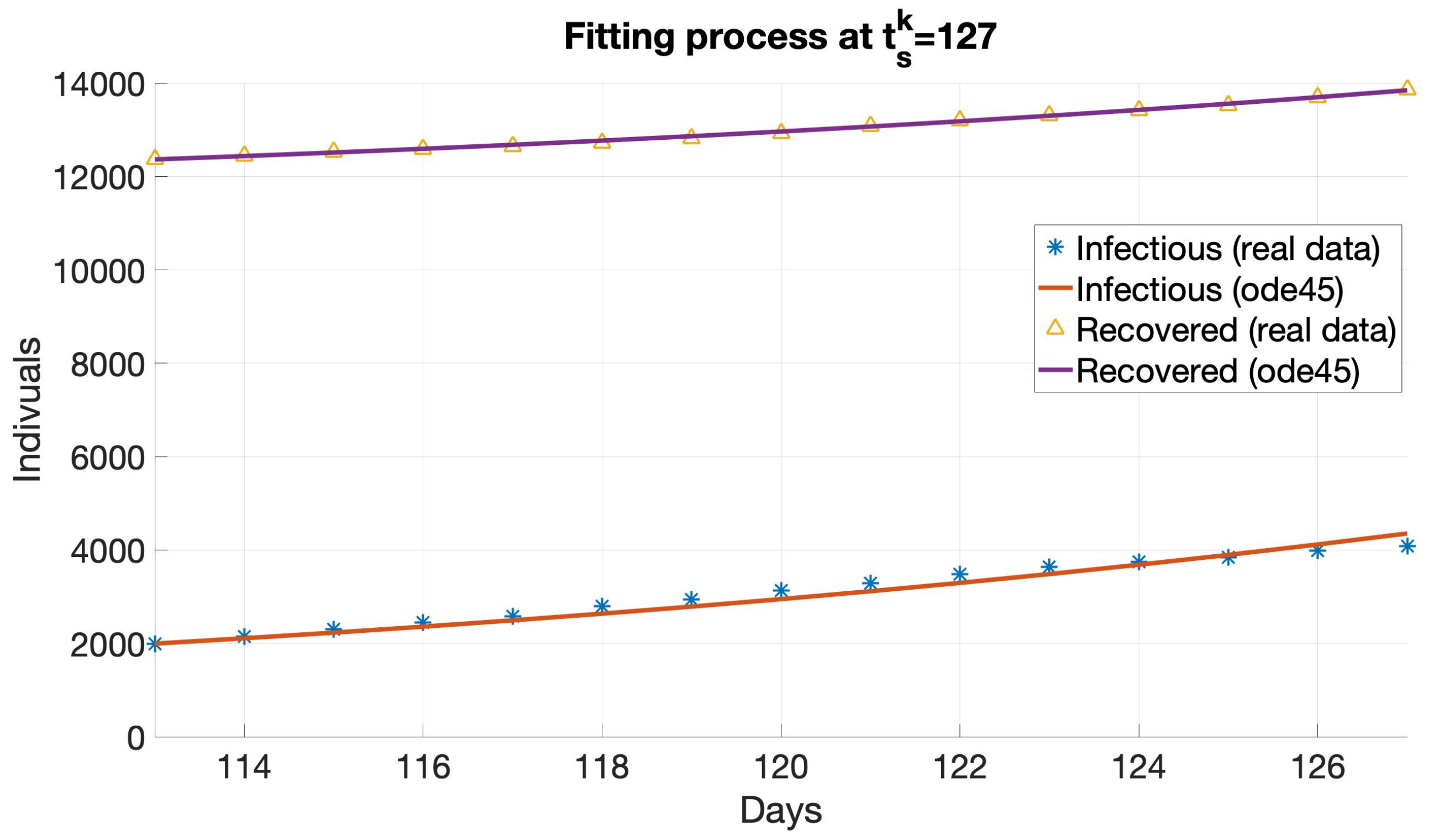}
\caption{Fitting process at $t_s^{17}=127$ (Infectious and Recovered Individuals)}
\label{figure3B}
\end{figure}	

In the next figures we present the evolution of the digital twin (DT) solution to the multivalued problem (\ref{eq:DT2}): in Figure \ref{figure4} we find the evolution of the multivalued solution for the susceptible individuals (note that we have restricted 
the scale for a better display of the results); in figures \ref{figure5} and \ref{figure6} we show the evolution for the infectious and recovered individuals, respectively. Observe that given a time $t\in [t_s^j,t_s^{j+1})$, we have $j+1$  forecasts $\{(S(t,t_s^k),I(t,t_s^k),R(t,t_s^k)),\; k=0,1,\ldots,j\}$. In these figures, the continuous non-emphasized lines correspond to realities that would have occurred if the conditions for the spread of the pandemic had not been modified (confinement, social distancing, use of masks, etc.). In this way, the multivalued solution introduced (Definition 2) implies that the same situation can be reached at a given moment from different previous situations. Of course, some potential realities were avoided due to different measures implemented by the authorities. Now, it is also the case with vaccination against COVID-19. We would also like also to highlight that the basic reproduction number \eqref{eq:r0} changes in the period of analysis, as represented in Figure \ref{figureR0}.

\begin{figure}[H]
\centering
\includegraphics[width=0.7\linewidth]{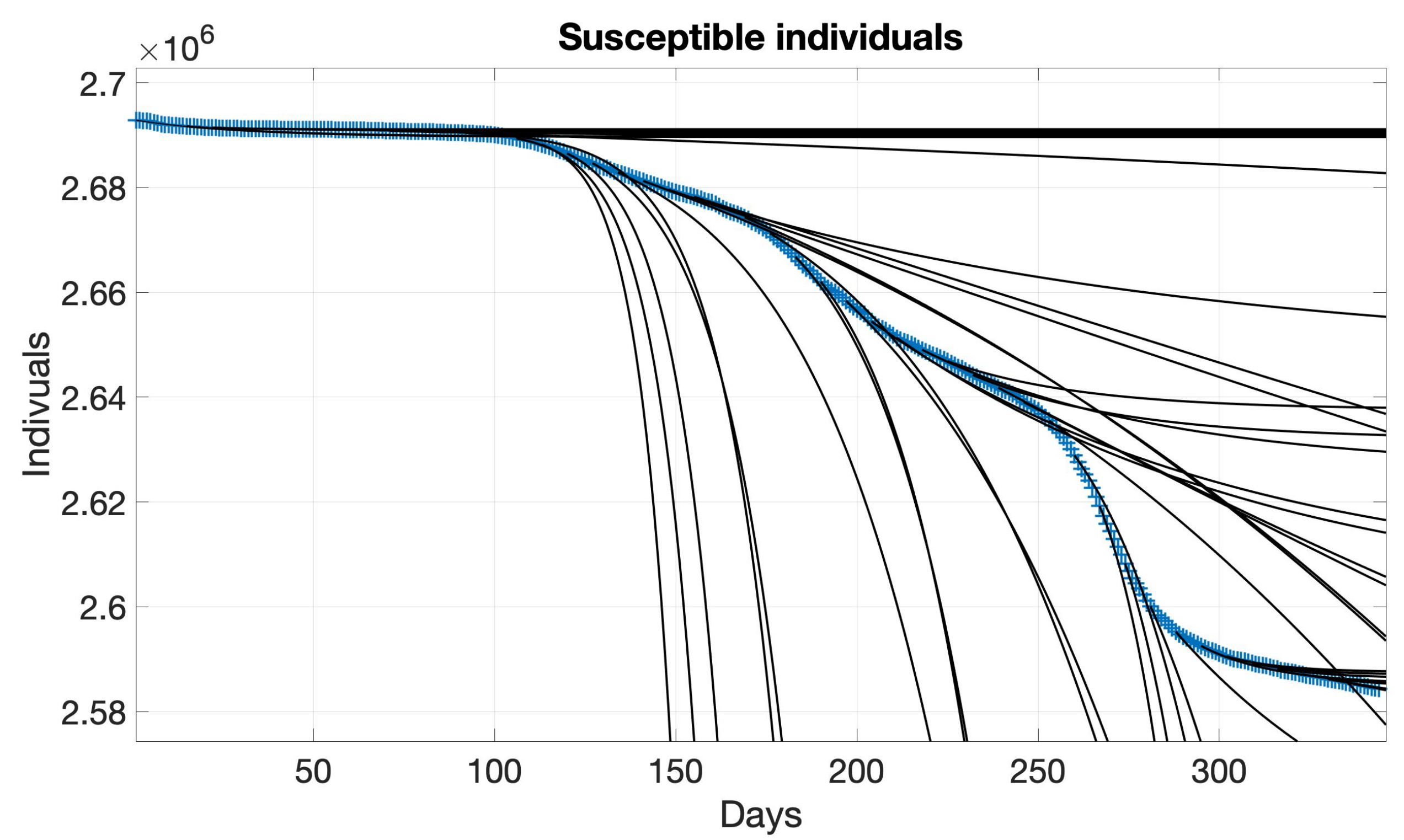}
\caption{Digital twin (DT) solution to the multivalued problem (susceptible individuals) in continuous lines. In discontinuous line the real data}
\label{figure4}
\end{figure}

\begin{figure}[H]
\centering
\includegraphics[width=0.7\linewidth]{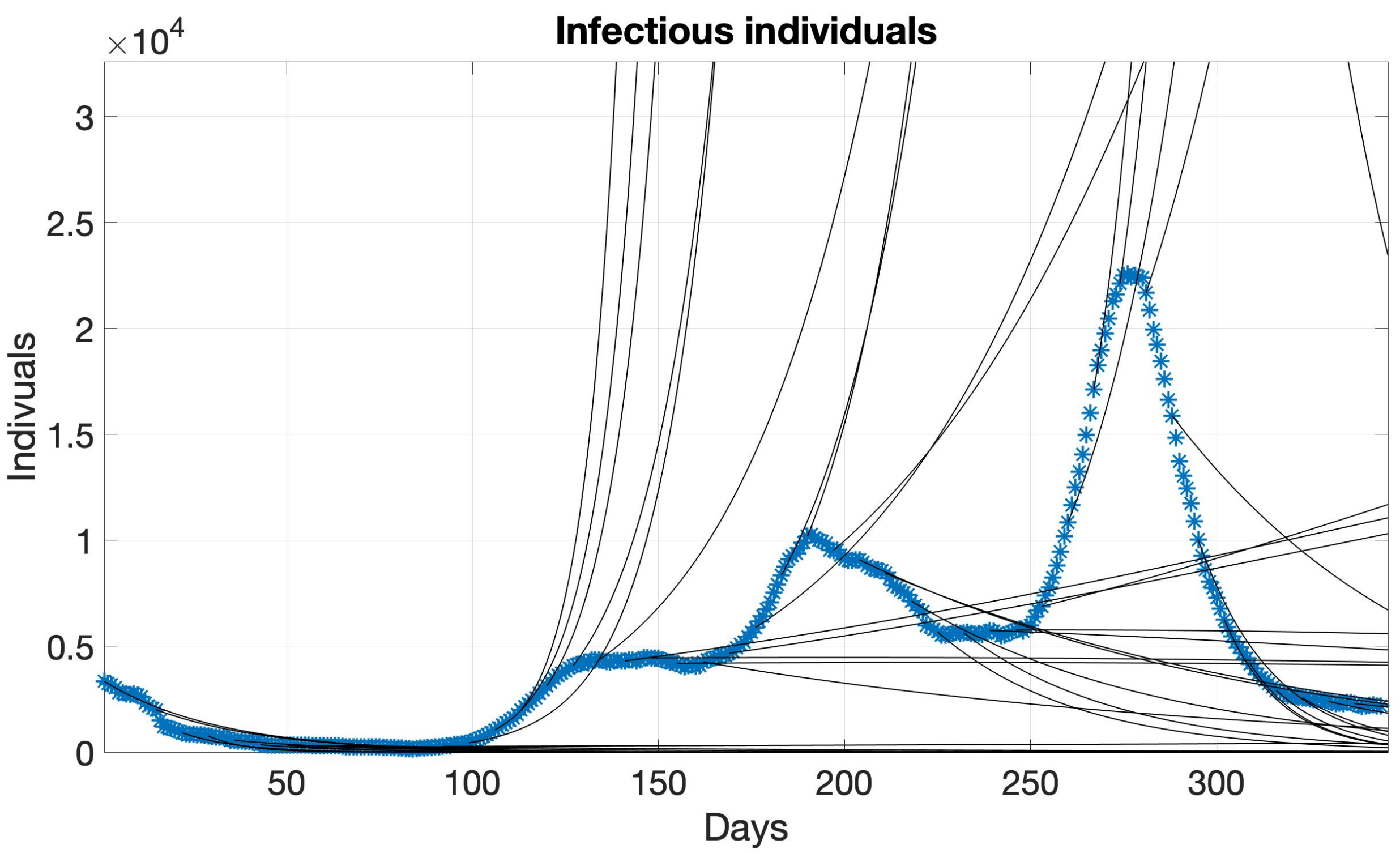}
\caption{Digital twin (DT) solution to the multivalued problem (infectious individuals) in continuous lines. In discontinuous line the real data}
\label{figure5}
\end{figure}

\begin{figure}[H]
\centering
\includegraphics[width=0.7\linewidth]{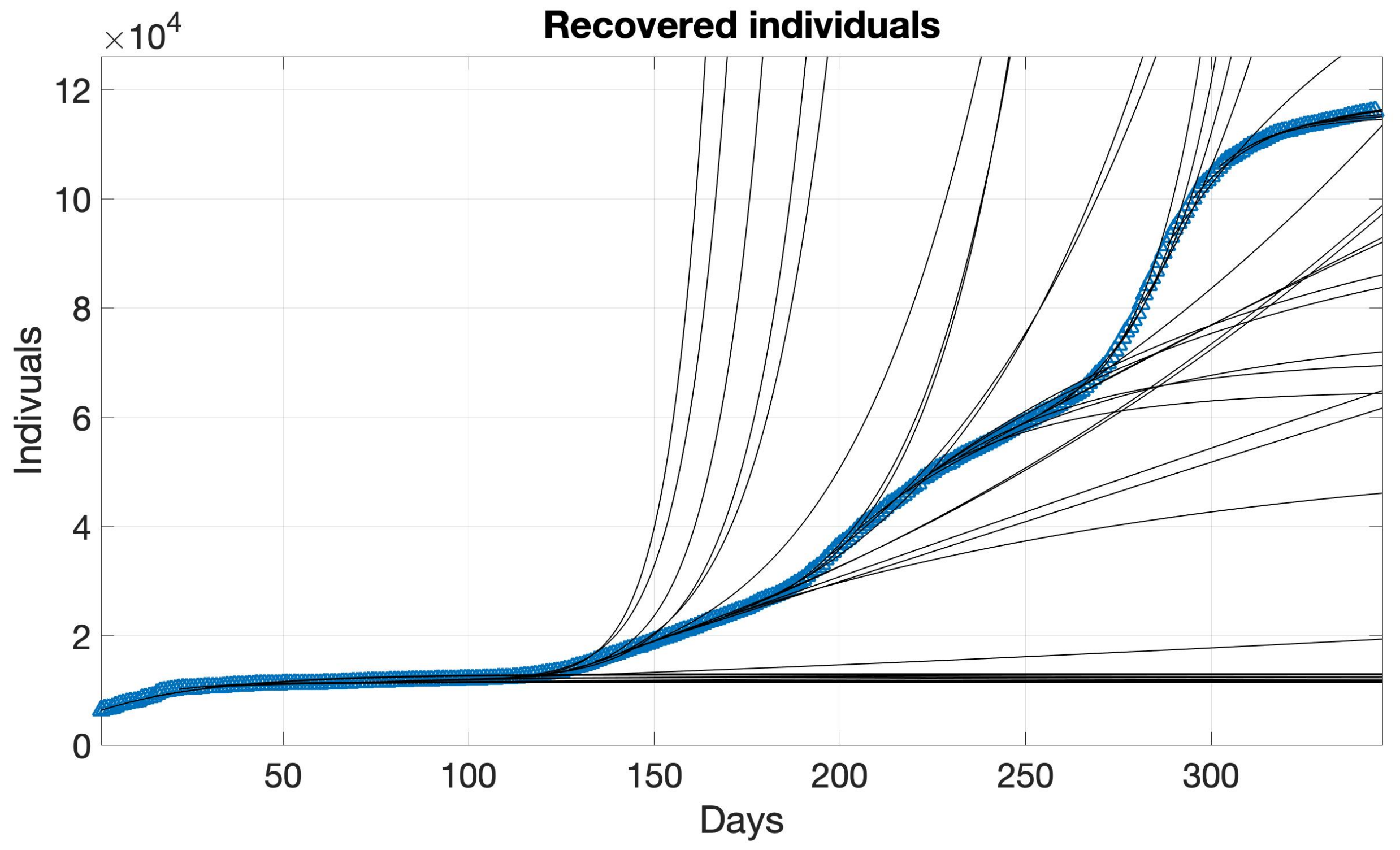}
\caption{Digital twin (DT) solution to the multivalued problem (recovered individuals) in continuous lines. In discontinuous line the real data}
\label{figure6}
\end{figure}

\begin{figure}[H]
\centering
\includegraphics[width=0.7\linewidth]{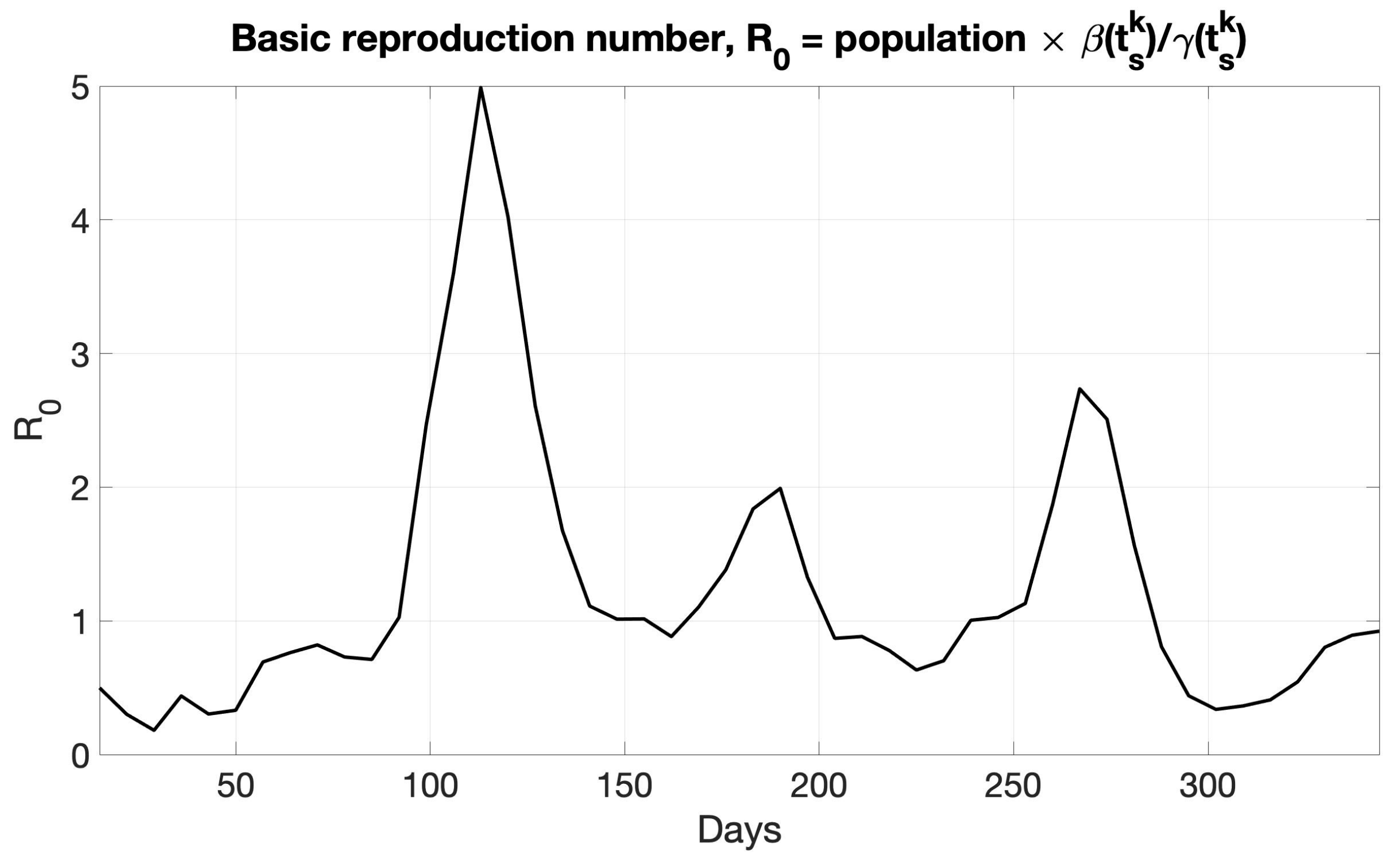}
\caption{Variation of the basic reproduction number \eqref{eq:r0} due to confinement, social distancing, use of masks and/or other policies to stop the spread of the pandemic}
\label{figureR0}
\end{figure}

Now, we will consider the Main Digital Twin (MDT) as the best forecast that we can make at all times \eqref{eq:MDT0}. In Figure \ref{figure7} we represent the evolution of the Main Digital Twin for the susceptible individuals; in Figure \ref{figure8} the evolution 
of MDT associated to the infectious individuals, and in Figure \ref{figure9}, the evolution for the recovered individuals. All of the numerical simulations have been done using the \textbf{ode45} Matlab function. 
The total CPU time required to perform the complete simulation of MDT was $0.18$ seconds.

\begin{figure}[H]
\centering
\includegraphics[width=0.7\linewidth]{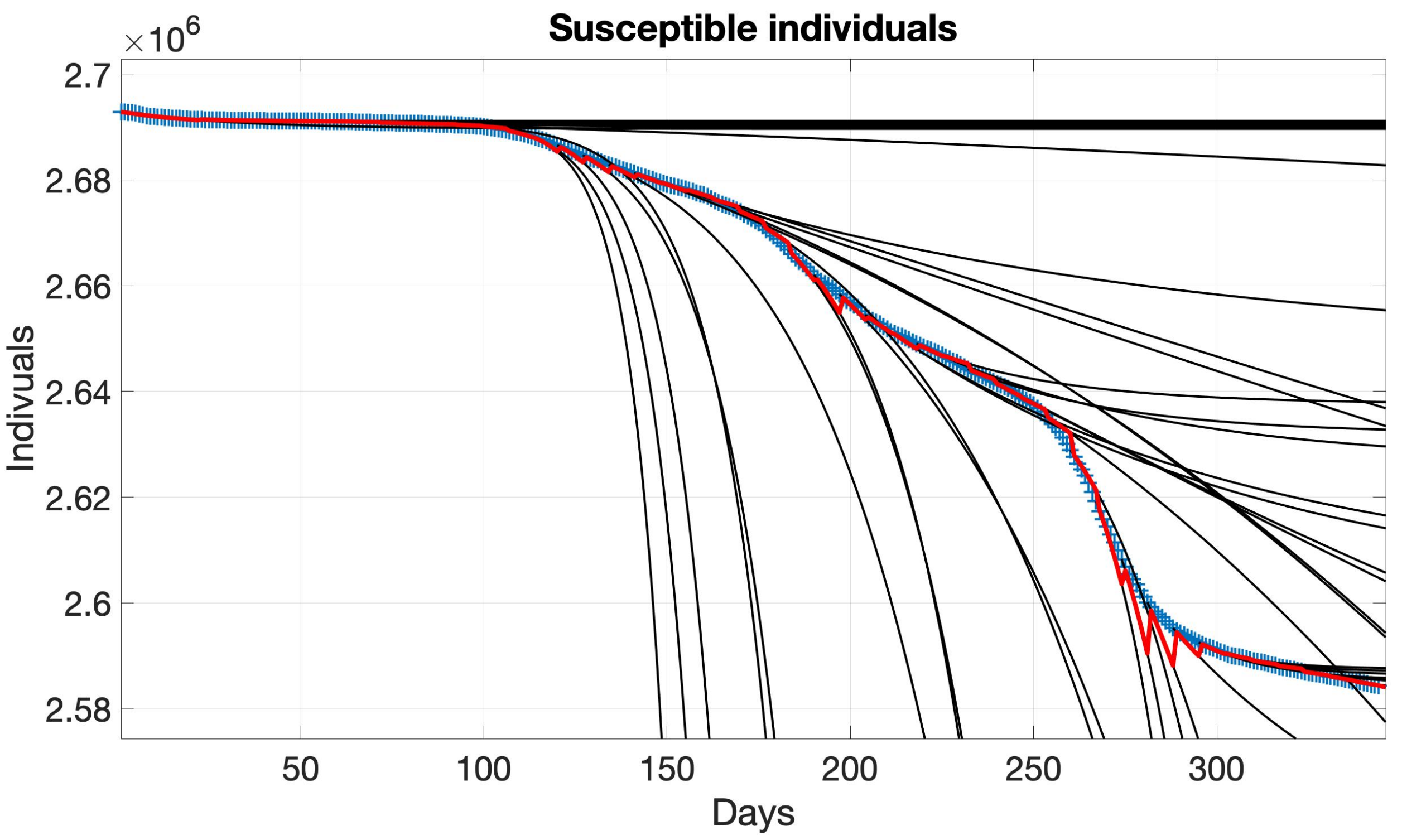}
\caption{Main Digital Twin (MDT) best forecast (susceptible individuals) in red continuous line. Real data in discontinuous line. Digital Twin in continuous black lines}
\label{figure7}
\end{figure}

\begin{figure}[H]
\centering
\includegraphics[width=0.7\linewidth]{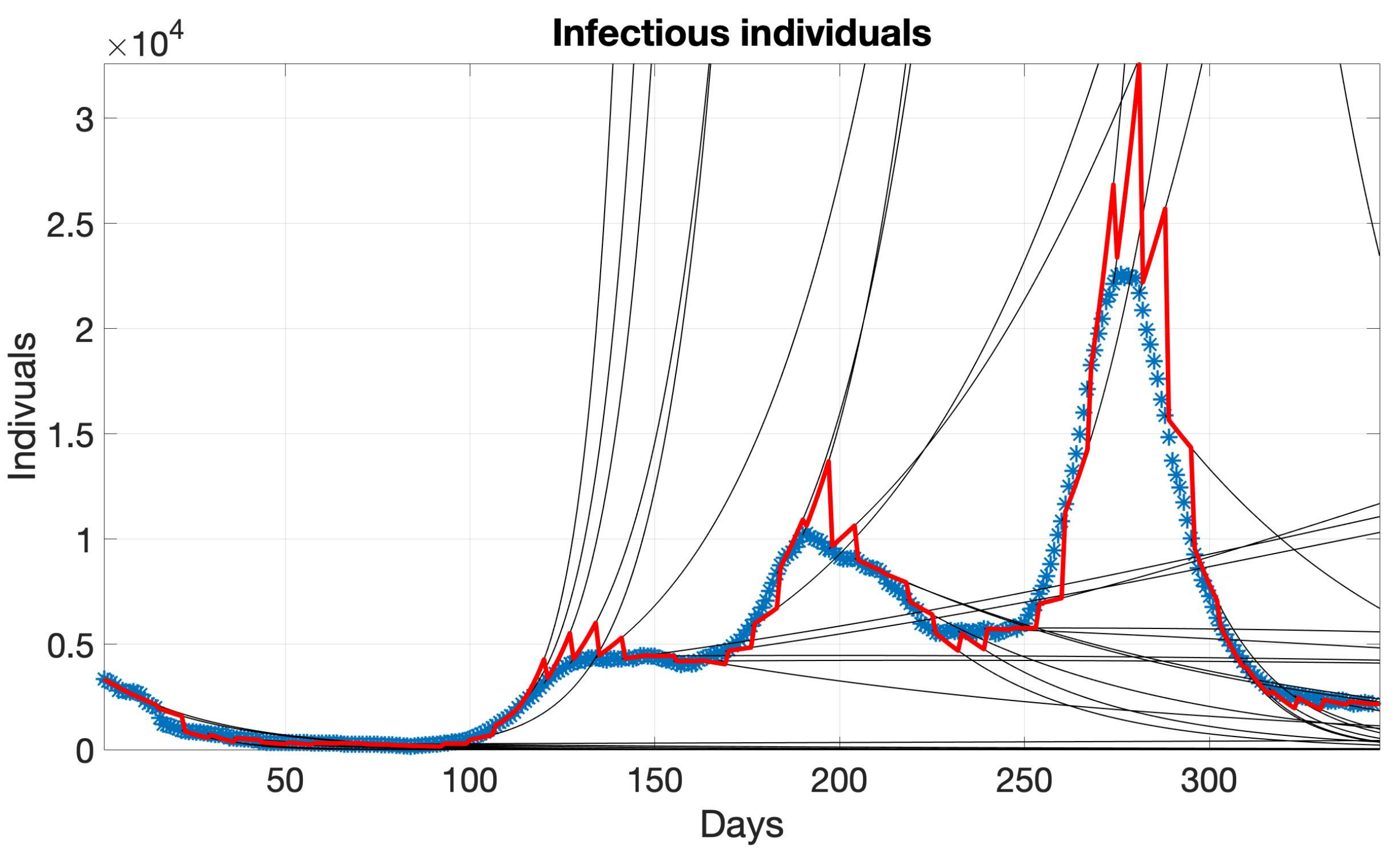}
\caption{Main Digital Twin (MDT) best forecast (infectious individuals)  in red continuous line. Real data in discontinuous line. Digital Twin in continuous black lines}
\label{figure8}
\end{figure}

\begin{figure}[H]
\centering
\includegraphics[width=0.7\linewidth]{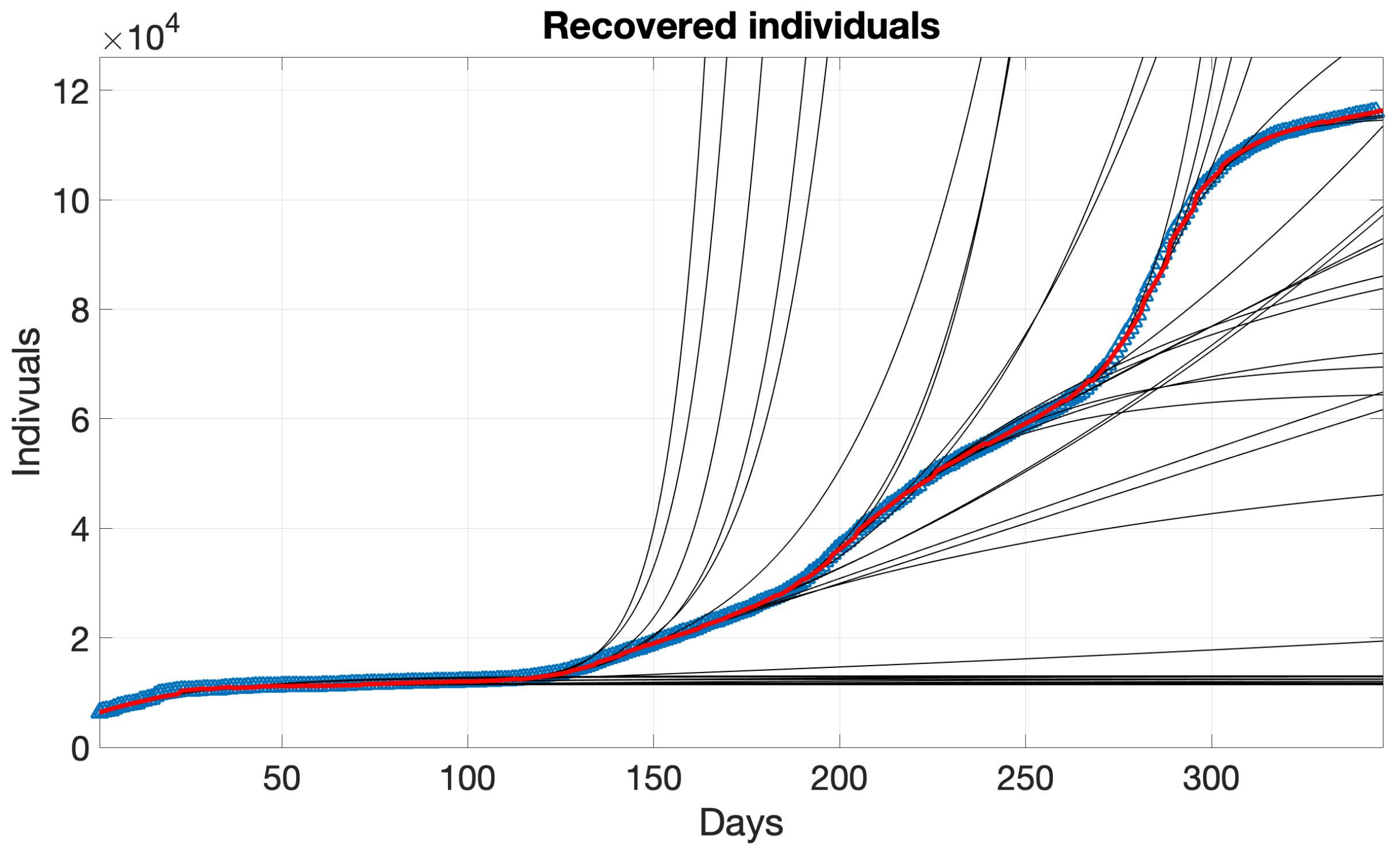}
\caption{Main Digital Twin (MDT) best forecast (recovered individuals) in red continuous line. Real data in discontinuous line. Digital Twin in continuous black lines}
\label{figure9}
\end{figure}

Finally we will compare the Main Digital Twin in terms of definition \eqref{eq:MDT0} with the solution 
of the Stieltjes differential equation \eqref{eq:MDT4}. The numerical approximation of the solution to 
a system of Stieltjes differential equations was introduced in \cite{Fernandez1} where a 
predictor-corrector numerical scheme to approximate the solution to a
Stieljes dieferential equation (also for systems) from a quadrature formula for the Lebesgue
Stieltjes integral was deduced. For this, a finite set of times $\{t_j\}_{j=0}^{N+1}\subset [t_s^0,T]$ 
is considered such that $T_s \subset \{t_j\}_{j=0}^{N+1}$, $t_0=T$, $t_{N+1}=T$ and 
$t_{k+1}-t_k=h>0$, for every $k=1,\ldots,N$. The application of the number scheme in our 
case consists of: given an element $\mathbf{x}_0=(S_0^0,I_0^0,R_0^0)$, we compute 
$\{(\mathbf{x}^+_{j-1},\mathbf{x}^*_j,\mathbf{x}_j)\}_{j=1}^{N+1}$ as
\begin{equation} \label{eq:scheme1}
\left\{\begin{array}{rcl}
x_{i,j}^+ &=&\displaystyle x_{i,k}+F_{s,i}(t_j,\mathbf{x}_j)\, \Delta^+ g_s(t_j), \vspace{0.2cm} \\
x_{i,j+1}^*&=&\displaystyle x_{i,k}^+ + F_{s,i} (t_j^+, \mathbf{x}_j^+ )\,(g_s(t_{j+1})-g_s(t_j^+)), 
\vspace{0.2cm} \\
x_{i,j+1} &=& \displaystyle x_{i,j}^+ + \frac{1}{2} \left(F_{s,i}(t_j^+, \mathbf{x}_j^+ )+
F_{s,i}(t_{j+1}^-,\mathbf{x}_{j+1}^*)\right)\,(g_s(t_{j+1})-g_s(t_j^+)),
\end{array}\right.
\end{equation}
for every $j=0,\ldots,N$ and $i=1,2,3$, being $\mathbf{x}_j=(x_{1,j},x_{2,j},x_{3,j})$. For the 
numerical simulations that we will present below we have considered $h=0.5$. In figures 
\ref{figure10}-\ref{figure12}, the comparison between the numerical solution to \eqref{eq:MDT0} using 
\textbf{ode45} and the solution to \eqref{eq:MDT4} using the scheme \eqref{eq:scheme1} for 
the susceptible, infectious and recovered individuals. Although the calculation 
of the distance between the numerical approximation of \eqref{eq:MDT0} and the 
numerical approximation \eqref{eq:MDT4} is not relevant since both are numerical 
approximations, in the simulations carried out we have observed a maximum difference of 
$1.07e+01$ for the susceptible individuals, $4.08e+00$ for the infectious individuals and 
$2.48e+00$ for the recovered individuals. If we consider $h=0.25$, we obtain a 
maximum difference of $5.56e-01$, $5.65e-01$ and $2.24e+00$ for, respectively, 
the susceptible, infectious and recovered individuals. The CPU time required to perform 
the simulation in the case $h=0.5$ was $0.14$ seconds and $0.15$ seconds in the case 
$h=0.25$.

\begin{figure}[H]
\centering
\includegraphics[width=0.7\linewidth]{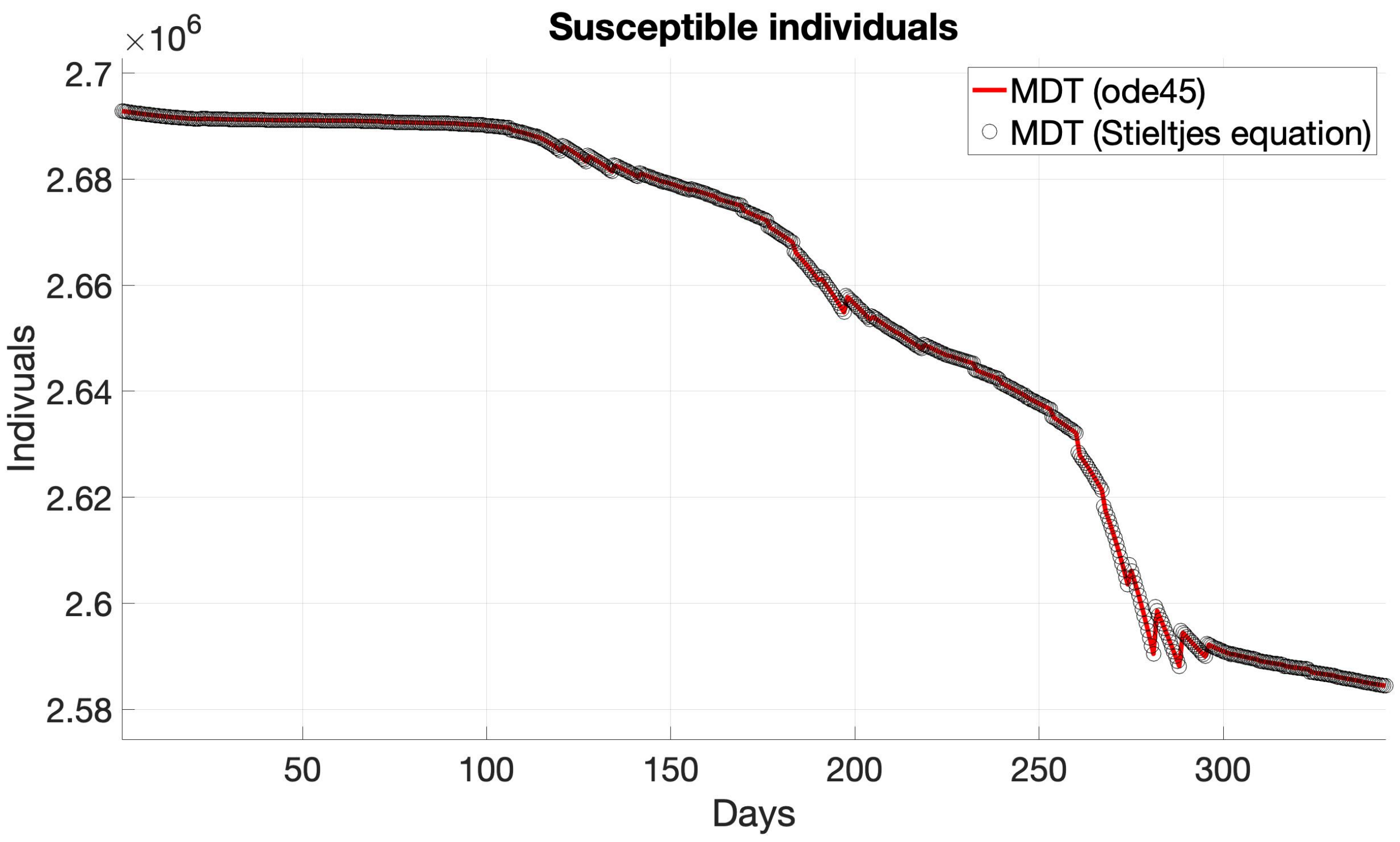}
\caption{Main Digital Twin (MDT) susceptible individuals}
\label{figure10}
\end{figure}

\begin{figure}[H]
\centering
\includegraphics[width=0.7\linewidth]{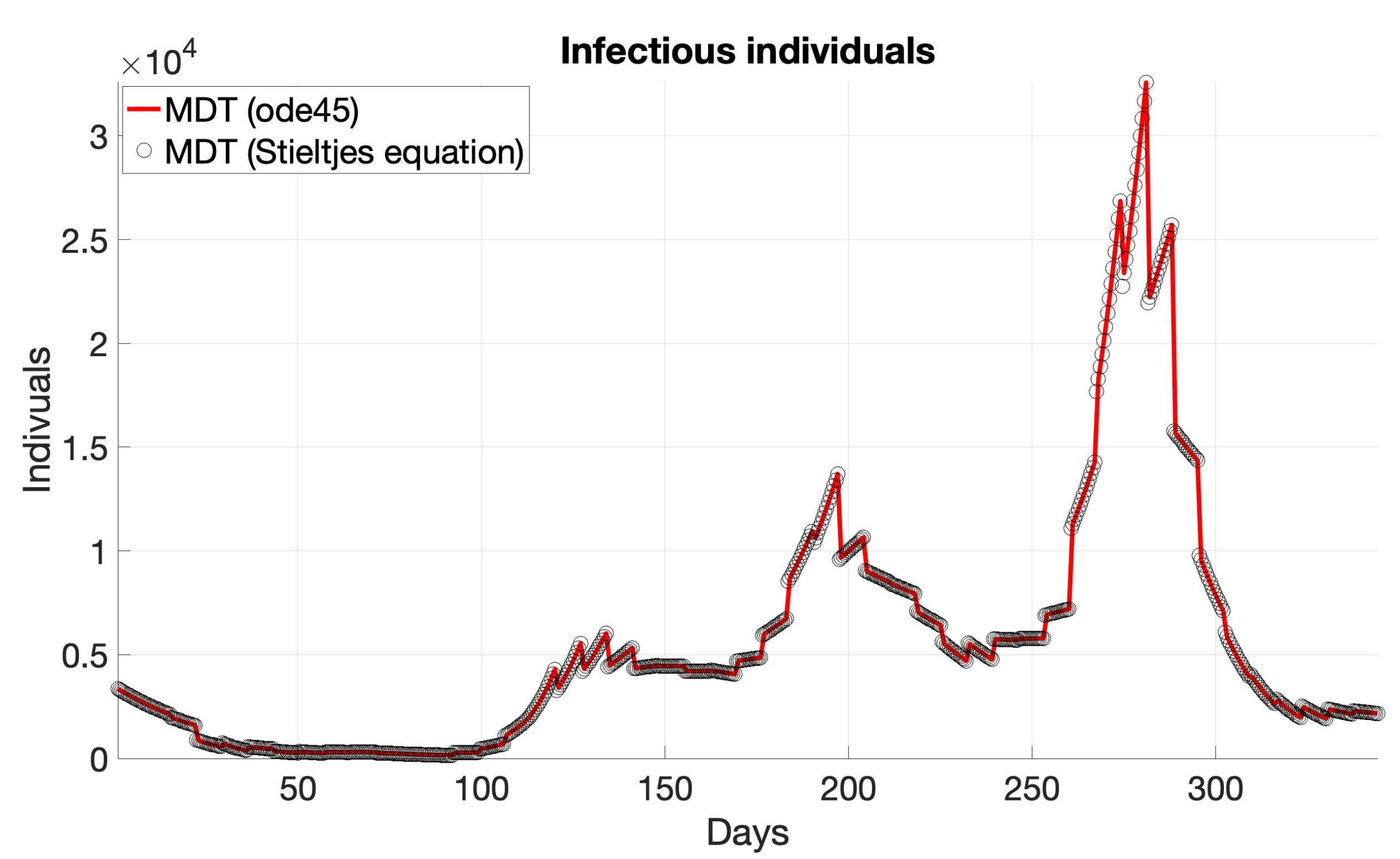}
\caption{Main Digital Twin (MDT) infectious individuals}
\label{figure11}
\end{figure}

\begin{figure}[H]
\centering
\includegraphics[width=0.7\linewidth]{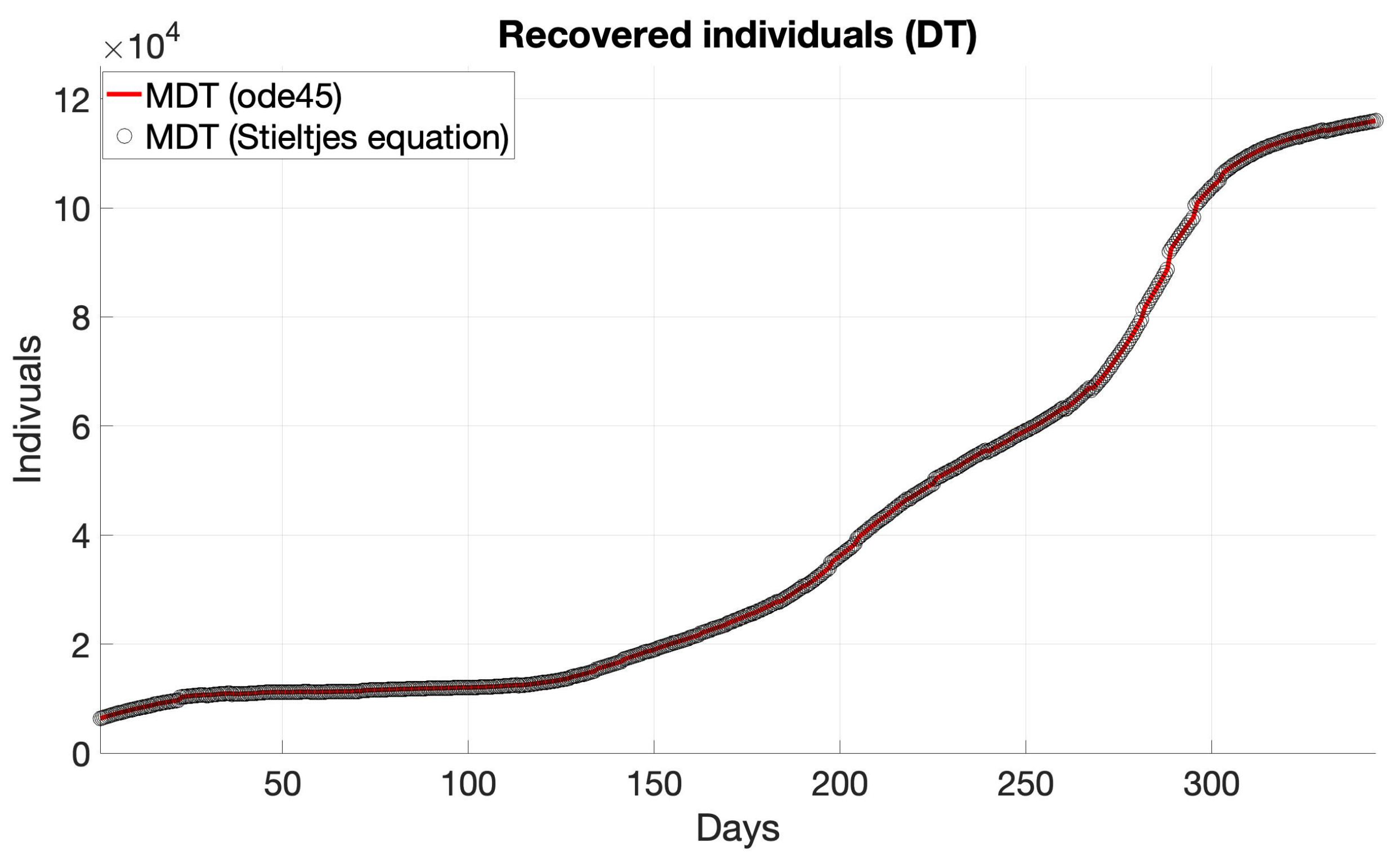}
\caption{Main Digital Twin (MDT) recovered individuals}
\label{figure12}
\end{figure}

\section{Conclusions}\label{sec:5}

The rise of digital twins motivated by industrial requirements puts into focus the need for this relatively new concept to be addressed and formalized in a clear and rigorous way. 
In this letter we propose a digital twin for a SIR mathematical compartmental model. In doing so, Stieltjes derivatives have been used intensively giving rise to a precise definition of solution. We have also analyzed the existence and uniqueness of the solution to the mathematical problem. Moreover, we have applied the proposed ideas to a specific problem: the spread of the pandemic of COVID-19 in Galicia. The numerical experiments performed show that the proposed definitions and results can be used to explain other pandemics.

\section*{Competing interests}

The authors declare that they have no known competing financial interests or personal relationships that could have appeared to influence the work reported in this paper.

\section*{Acknowledgments}
This work has been partially supported by the Agencia Estatal de Investigaci\'on (AEI) of Spain under Grant MTM2016-75140-P, cofinanced by the European Community fund FEDER, as well as by Instituto de Salud Carlos III, grant COV20/00617.  FJF, FAFT and JJN are beneficiary of Xunta de Galicia grant ED431C 2019/02 for Competitive Reference Research Groups (2019-22).


\begin{thebibliography}{99}

\bibitem{Alam}
K. M. Alam, A. El Saddik. 
C2PS: A Digital Twin Architecture Reference  Model for the Cloud-Based Cyber-Physical Systems. 
IEEE Access 5 (2017) 2050--2062.

\bibitem{Brauer}
F. Brauer, C. Castillo-Chavez, Z. Feng.
Mathematical models in epidemiology. 
Springer-Verlag, New York, 2019.

\bibitem{Chakraborty}
S. Chakraborty, S. Adhikari, R.Ganguli. 
The role of surrogate models in the development of digital twins of dynamic systems.
Applied Mathematical Modelling 90 (2021) 662--681

\bibitem{Chenzhao}
 L. Chenzhao, S. Mahadevan, Y. Ling, S. Choze, L. Wang. 
 Dynamic Bayesian Network for Aircraft Wing Health Monitoring Digital Twin.
 American Institute of Aeronautics and Astronautics 55 (3) (2017).
 
\bibitem{Fernandez1} 
F.J. Fern\'andez, F.A.F. Tojo. 
Numerical solution to Stieltjes Differential Equations. 
Mathematics 8 (9) (2020) 1571.

\bibitem{FriLo17}
M. Frigon, R. L\'opez Pouso.
Theory and applications of first-order systems of Stieltjes differential equations.
Adv. Nonlinear Anal. 6 (1) (2017) 13--36.


\bibitem{Ganguli}
R. Ganguli, S. Adhikari.
The digital twin of discrete dynamic systems: Initial approaches and future challenges. 
Applied Mathematical Modelling 77 (2020) 1110--1128.

\bibitem{Harko}
T. Harko, F.S.N. Lobo, M.K. Mak.
Exact analytical solutions of the Susceptible-Infected-Recovered (SIR) epidemic model and of the SIR model with equal death and birth rates.
Appl. Math. Comput. 236 (2014) 184--194.

\bibitem{Liao}
Z. Liao, P. Lan, Z. Liao, et al. 
TW-SIR: time-window based SIR for COVID-19 forecasts. 
Sci Rep 10 (2020) 22454.

\bibitem{LoMa18}
R.~L\'{o}pez~Pouso, I.~M\'{a}rquez~Alb\'{e}s, General existence principles for
{S}tieltjes differential equations with applications to mathematical biology.
J. Differential Equations 264~(8) (2018) 5388--5407.

\bibitem{POUSO2015}
R. L\'{o}pez Pouso, A. Rodr\'{i}guez.
A new unification of continuous, discrete, and impulsive calculus through {S}tieltjes derivatives.
Real Anal. Exchange 2014/15 (40) 319--353.

\bibitem{Marquez}
I. M\'arquez-Alb\'es.
Differential problems with Stieltjes derivatives and applications. Ph.D. thesis. 
Universidade de Santiago de Compsotela (2021).

\bibitem{MarquezAlbes2020}
I. M\'arquez~Alb\'es, F.A.F. Tojo.
Displacement Calculus. Mathematics 8 (3) (2020) 419.

\bibitem{Wuhan}
F. Nda\"{\i}rou, I. Area, J.J. Nieto, D.F.M. Torres.
Mathematical modeling of COVID-19 transmission dynamics with a case study of Wuhan.
Chaos, Solit. Fractals 135 (2020) 109846.

\bibitem{Fractional}
F. Nda\"{\i}rou, I. Area, J.J. Nieto, C. Silva, D.F.M. Torres.
Fractional model of COVID-19 applied to Galicia, Spain and Portugal.
Chaos, Solit. Fractals 144 (2021) 110652.

\bibitem{Kermack}
W.O. Kermack, A. G. McKendrick. A contribution to the mathematical theory of epidemics. 
Royal Society 115 (772) (1927) 

\bibitem{Negri}
E. Negri.
A review of the roles of Digital Twin in CPS-based production systems. 
Procedia Manufacturing. 11 (2017) 939--948.

\bibitem{Piascik}
R. Piascik, et al.
Technology Area 12: Materials, Structures, Mechanical Systems, and Manufacturing Road Map. 2010.
NASA Office of Chief Technologist.

\bibitem{Srivastava}
H. M. Srivastava, I. Area, J.J. Nieto. 
Power-series solution to compartmental epidemiological models. 
Mathematical Biosciences and Engineering 18 (4) (2021)  3274--3290.

\bibitem{Tao}
F. Tao, M. Zhang, Y. Liu, A.Y.C. Nee. 
Digital twin driven prognostics and health management for complex equipment. 
CIRP Annals 67 (1) (2018) 169--172,

\bibitem{Tuegel}
E.J. Tuegel, A.R. Ingraffea, T.G. Eason, S.M. Spottswood. 
Reengineering Aircraft Structural Life Prediction Using a Digital Twin. 
International Journal of Aerospace Engineering 2011 (2011) 154798.
  
\end{thebibliography}
\end{document}